\newtheorem{thm}{Theorem}
\newtheorem{lem}{Lemma}
\newtheorem{prop}{Proposition}
\newtheorem{rem}{Remark}
\newtheorem{cor}{Corollary}
\newtheorem{ass}{Assumption}
\newcommand{\beq}{\begin{eqnarray}}
\newcommand{\eeq}{\end{eqnarray}}
\newcommand{\beqs}{\begin{eqnarray*}}
\newcommand{\eeqs}{\end{eqnarray*}}
\newcommand{\bequ}{\begin{equation}}
\newcommand{\eequ}{\end{equation}}
\def\r{\rho}
\newcommand{\cal}{\mathcal}
\newcommand{\pa}{\partial}
\def\e{\varepsilon}
\let\pa\partial
\let\eps\varepsilon
\title{Kinetic derivation of fractional Stokes and Stokes-Fourier systems}
\author{S. Hittmeir}\address{Sabine Hittmeir: RICAM Linz, Austrian Academy of Sciences, Altenberger Str. 69, 4040 Linz, Austria, (sabine.hittmeir@oeaw.ac.at)} 
\author{S. Merino-Aceituno}\address{S. Merino-Aceituno: Cambridge Centre for Analysis, Centre for Mathematical Sciences, University of Cambridge, Wilberforce Road, Cambridge CB3 0WA, UK (s.merino-aceituno@maths.cam.ac.uk)}
\thanks{This work has been initiated during inspiring discussions with Clement Mouhot, to whom the authors want to thank for many helpful advices and encouragement. The authors are also grateful to James Norris for discussions that lead to a better understanding of the problem on the probabilistic side. 
\medskip\\
SH thanks to the financial support by the Austrian Academy of Sciences \"OAW via
the New Frontiers project NST-0001 and acknowledges the support of the King Abdullah University of Science and Technology (KAUST) within
grant KUK-I1-007-43 during the research stay in Cambridge, where this collaboration has been started.  \\
The work of SMA was supported by the UK Engineering and Physical Science Research Council (EPSRC) grant EP/H023348/1 for the University of Cambridge Centre for Doctoral Training, the Cambridge Centre for Analysis; and also by a Fellowship for Graduate Courses funded by the Fundacion Caja Madrid. Finally, SMA thanks to the Wolfang Pauli Institute in Vienna for its hospitality and support during her stays there.}
\begin{document}

\begin{abstract}
In recent works it has been demonstrated that using an appropriate rescaling, linear Boltzmann-type equations give rise to a scalar fractional diffusion equation in the limit of a small mean free path. The equilibrium distributions are typically heavy-tailed distributions, but also classical Gaussian equilibrium distributions allow for this phenomena if combined with a degenerate collision frequency for small velocities. This work aims to an extension in the sense that a linear BGK-type equation conserving not only mass, but also momentum and energy,  for both mentioned regimes of equilibrium distributions is considered. In the hydrodynamic limit we obtain a fractional diffusion equation for the temperature and density making use of the Boussinesq relation and we also demonstrate that with the same rescaling fractional diffusion cannot be derived additionally for the momentum. But considering the case of conservation of mass and momentum only, we do obtain the incompressible Stokes equation with fractional diffusion in the hydrodynamic limit for heavy-tailed equilibria.
\end{abstract}
\maketitle
\emph{Keywords.} Kinetic transport equation, linear BGK model, hydrodynamic limit, fractional diffusion, anomalous diffusive time scale, incompressible Stokes equation, Stokes-Fourier system
\medskip\\
\emph{AMS 2010 subject classification:} 35Q20, 35Q30, 35Q35, 35R11
\medskip\\


\pagestyle{myheadings}
\thispagestyle{plain}
\section{Introduction}
Asymptotic analysis for kinetic transport equations of Boltzmann-type is a very classical problem. For a collision operator conserving mass, it is well known that for a small mean-free path and under an appropriate rescaling a diffusion equation for the density of particles can be obtained, see e.g. \cite{BLP}, \cite{DGP}, \cite{BDD}. This diffusive approximation requires the equilibrium distributions, which are classically a Gaussian, to have finite variance. On the other hand  heavy-tailed equilibrium distributions violating this finite variance condition do appear in  many contexts, e.g. in astrophysical plasmas \cite{ST}, in mixtures of Maxwell gases \cite{BG} or in applications in economy \cite{DT}.  For such types of equilibrium distributions the diffusive time scale is too long. Mellet et al. \cite{MMM} showed that under an appropriate rescaling of the linear Boltzmann equation in this case a fractional diffusion equation can be derived in the macroscopic limit. Their proof relies on the use of the Fourier-Transform. 
Similar results to \cite{MMM} have also been derived from a probabilistic approach in \cite{JKO}. Mellet \cite{M} rederived these results using a new moments method that also allows for considering space-dependent collision operators, therefore providing a more suitable framework for addressing nonlinear problems. Using the same method it was shown in \cite{BMP} that the phenomenon of fractional diffusion can also arise out of a degenerate collision frequency for small velocities, where here the equilibrium distribution can be chosen as the classical Gaussian.  
In \cite{BMP2} the authors demonstrate how a Hilbert expansion approach can be successfully employed to derive fractional diffusion equations from linear kinetic transport equations. This method requires stronger assumptions on the initial data than the moments method, but therefore also provides better convergence results. 

In this work we aim to extend this fractional diffusion limit to a kinetic transport equation conserving not only mass, but also momentum and energy. Many works have been investigating the incompressible fluid dynamical limit of the Boltzmann equation, see e.g. \cite{BU}, \cite{GL}, \cite{S} and references therein. We shall review the basic formal derivation of the linear equations of the corresponding hydrodynamic limit below. On the other hand Navier-Stokes type of equations with a fractional Laplacian have gained also a lot of interest, and have been e.g. related to a model with modified dissipativity arising in turbulence in \cite{BPFS}. For an existence and uniqueness result in Besov spaces we refer to \cite{W}. A derivation of fractional fluid dynamical equations from kinetic transport equations would therefore be desirable to obtain. As a first step towards this direction  we  here  analyse the linear case, i.e. we start from a linear kinetic transport equation of the form 
\beq
\label{kin.equ}
 \pa_t f + v\cdot \nabla_x f = {\cal L} f\,,
\eeq
where we assume the null space of ${\cal L}$ to be spanned by the equilibrium distribution $M(v)$ satisfying
\[
M(v)=M(|v|)\geq 0, \quad M(v)<\infty, \quad \int_{\mathbb{R}^d} M(v) dv =1\,,
\]
with the moment conditions
\beq\label{mom.M}
\int_{\mathbb{R}^d} M(v) dv =1, \quad \int_{\mathbb{R}^d} |v|^2 M(v) dv = d, \quad \int_{\mathbb{R}^d} |v|^4 M(v) dv = d(d+2)\,.
\eeq
We assume in the following $M(v)$ to be either the classical Gaussian
\beq\label{M.gauss}
M^*(v)=\frac{1}{(2\pi)^{d/2}}e^{-\frac{|v|^2}{2}}\,,\eeq
or a heavy tailed distribution  satisfying
\beq\label{M.tail}
\tilde M(v)= \frac{c_0}{|v|^{\alpha+d}} \quad \textnormal{for} \ |v|\geq 1 \,
\eeq
for some $\alpha>4$, that will be specified below and for some positive constant $c_0$.
For the Gaussian $M^*(v)$ the moment conditions in \eqref{mom.M} can be easily verified. For the second class of equilibrium distributions with heavy tails we only prescribe the behaviour for $|v|\geq 1$ and assume $\tilde M(v)$ to be smooth and bounded from above and below for small velocities. Hence for $\alpha>4$ the particularly chosen constants in \eqref{mom.M} mean no loss of generality. If in the following we keep the general notation $M(v)$, the statement holds for both $M(v)=M^*(v)$ and $M(v)=\tilde M(v)$.

The macroscopic moments for density, momentum and temperature (actually, temperature times density) of $f$ are given by
\beqs
U_f=\left( \begin{array}{c}\r_f\\m_f\\\theta_f\end{array}\right)= \int_{\mathbb{R}^d} \zeta(v) f  dv, \qquad \textnormal{where} \quad \zeta(v)=\left( \begin{array}{c}1\\v\\\frac{|v|^2-d}{d}\end{array}\right)\,.
\eeqs
We consider a  linear collision operator of the form
\beq\label{def.L}
{\cal L} f= \nu(v)\left({\cal K} f - f \right)
\eeq
with the operator ${\cal K}$ being defined as
\beq\label{def.K}
{\cal K} f= M(v)\, \phi(v)\cdot U_{\nu,f}= M(v)\left(\r_{\nu,f} +  v\cdot m_{\nu,f} +\frac{|v|^2-d}{2}\theta_{\nu,f}\right)\,,
\eeq
where 
\beq\label{phi}
\phi(v)=\left( \begin{array}{c}1\\v\\\frac{|v|^2-d}{2}\end{array}\right)\,
\eeq
differs from $\zeta(v)$ only due to a normalising constant in the last component.
The collision frequency is assumed to be velocity dependent in the sense that
\[\nu(v)=\nu(|v|)\geq 0\,.\]
For the Gaussian equilibrium distribution  the corresponding collision frequency $\nu(v)=\nu^*(v)$ is assumed to have a degeneracy as $|v|\rightarrow 0$ of the following form
 \beq\label{nu.gauss}
 \nu^*(v)=|v|^{\beta^*}  \qquad \textnormal{for} \quad |v|\leq 1\,,
 \eeq
for some $\beta^*>0$ specified below. Moreover $\nu^*(v)$ is assumed to be smooth and bounded from above and below by a positive constant for $|v|\geq 1$. For the heavy-tailed equilibrium distribution the following far-field behaviour of the collision frequency $\nu(v)=\tilde \nu(v)$ is assumed
\beq\label{nu.tail}
\tilde\nu(v)=|v|^{\tilde \beta} \qquad \textnormal{for}\quad  |v|\geq 1\,,
\eeq
where $\tilde \beta<1$ will be coupled to the parameter $\alpha$ determining the tail of $\tilde M(v)$. Here $\tilde \nu(v)$ is assumed to be smooth and bounded from above and below by a positive constant for small velocities.  The macroscopic quantity  $U_{\nu,f}=(\r_{\nu,f},m_{\nu,f},\theta_{\nu,f})^T$ is defined via
\beq\label{U.nu}
\int_{\mathbb{R}^d}  \nu \phi f dv = A  U_{\nu,f}
\eeq
in such a way that the collision operator satisfies the conservation laws
\beq\label{cons.L}
\int_{\mathbb{R}^d} \phi {\cal L} f  dv =0 \,. \eeq
Using \eqref{def.K} this implies for the matrix $A$ in \eqref{U.nu}
\[A=\int_{\mathbb{R}^d} \nu \, \phi\otimes \phi M dv, 
\]
where invertibility of $A$ can be checked by direct calculation. Observe that for $f$ of the form $f=M\phi\cdot U$ we have $U_{\nu,f}=U_f=U$. We can then express the linear operator ${\cal K}$ as 
\beq
{\cal K} f = M\, \phi\cdot U_{\nu,f}
\label{K.equ}=  M\, \phi\cdot A^{-1} \int_{\mathbb{R}^d} \nu \phi  f  dv\,.
\eeq
Now the conservation properties can easily be checked 
\beqs
\int_{\mathbb{R}^d} \phi {\cal L}f \, dv &=& \int_{\mathbb{R}^d} \nu \phi f  dv - \int_{\mathbb{R}^d} \nu \phi M  \phi \cdot A^{-1}  dv \int_{\mathbb{R}^d} \nu \phi \, f \, dv\\
&=& \left(I-\int_{\mathbb{R}^d} \nu \phi \otimes \phi M  dv A^{-1}  \right) \int_{\mathbb{R}^d} \nu \phi f  dv
=\left(I-A A^{-1}  \right)\int_{\mathbb{R}^d} \nu \phi f  dv  = 0\,.
\eeqs
Clearly the vector $\phi(v)$ in \eqref{cons.L} can be replaced by the vector $\zeta(v)$, since their only difference is a normalising constant factor in the last component. Integrating  the kinetic transport equation against $\zeta(v)$, the conservation laws in terms of the macroscopic moments read: 
\beqs
&&\pa_t \r_f + \nabla\cdot m_f=0\,,\\
&&\pa_t m_f + \nabla_x \cdot \int_{\mathbb{R}^d} v \otimes v \, f \, dv =0\,,\\
&&\pa_t \theta_f + \nabla_x \cdot\int_{\mathbb{R}^d} v\frac{|v|^2-d}{d} f\, dv =0\,.
\eeqs
As mentioned above we will also investigate the limit to the fractional Stokes equation, hence in this case we shall only assume the conservation of mass and momentum. In this case we have
\beq\label{cons.mm}
\bar{\phi}(v) = \left(\begin{array}{c}1\\v\end{array}\right)\,, \qquad \bar U_f=\left(\begin{array}{c}\r_f\\m_f\end{array}\right)=\int_{\mathbb{R}^d} \bar\phi(v) f dv\,.
\eeq
Observe in particular that the corresponding $\bar A$ is a diagonal matrix.

In the remainder of introduction we are going to motivate the choice of the linear BGK model and recall the formal classical Stokes-Fourier limit as well as point out the difference to the regime with fractional rescaling. We then summarise the assumptions on the equilibrium distributions and the parameters involved and state the main results. Section 2 contains well-posedness and a priori estimates. We then introduce in a similar fashion to \cite{M} and \cite{BMP} an auxiliary function on which the moments method is based upon in Section 3 and prove the necessary convergence properties for the individual terms arising in the weak formulation. These are then unified for deriving the macroscopic dynamics in the fractional Stokes and Stokes-Fourier limit in Section 4.

Before demonstrating the classical formal Stokes-Fourier limit we shall give a brief motivation for the choice of our collision operator. Struchtrup \cite{S} and e.g. also \cite{CDL} used a power law form of $\nu$ in terms of $|v-m/\r|$ for large absolute values of the latter to obtain the correct Prandtl number out of a nonlinear BGK model of the following type:
\beq\label{nonl.BGK}
\pa_t F + v\cdot \nabla_x F =\nu(|v-m_F/\r_F|) \left({\cal M}(\r_{\nu,F},m_{\nu,F}, \theta_{\nu,F}) - F\right)\,,
\eeq
where ${\cal M}$ denotes the Maxwellian
\beqs
{\cal M}(U)=\frac{\r}{(2\pi \theta)^{d/2}}e^{-\frac{|v-m/\r|^2}{2\theta}}\,.
\eeqs
The macroscopic quantities $U_{\nu,F}$  are again defined such that the conservation laws are guaranteed:
\beqs
\int_{\mathbb{R}^d} \nu(|v-m_F/\r_F|) \,\phi(v) {\cal M}(U_{\nu,F})dv =
\int_{\mathbb{R}^d} \nu(|v-m_F/\r_F|) \,\phi(v) F \, dv\,.
\eeqs

We assume to be close to the global equilibrium ${\cal M}(1,0,1)$ (which corresponds to $M^*(v)$ from \eqref{M.gauss}). This means we can write for the remainder  $F-{\cal M}(1,0,1)=\delta  f$ for a  small parameter $\delta$. Then the linearised equation reads as follows
\beq
\pa_t f + v\cdot \nabla_x f = \nu(|v|) \left(\nabla_{U} {\cal M}(1,0,1)\cdot U_{\nu,f}  - f\right)\,,
\eeq
where $U_{\nu,f}$ is given by relation \eqref{U.nu}. Observing moreover $\nabla_U {\cal M}(1,0,1) = \phi(v) M^*(v)$ we arrive at \eqref{kin.equ} with the operator given by \eqref{def.K}.

\subsection{The (classical) Stokes-Fourier Limit}
We shall briefly outline the formal derivation of the Stokes-Fourier system as the (classical) diffusion limit from the linear kinetic transport equation with  the diffusion scaling $\gamma=2$:
\beq\label{kin.equ.class}
\e^2 \pa_t f^\e + \e v\cdot \nabla_x f^\e = \nu( M  \phi\cdot U_\nu^\e - f^\e)\,,
\eeq
where here and in the following we denote the macroscopic moments of $f^\e$ by $U^\e:=U_{f^\e}$.  For more details we refer e.g. to the work of \cite{GL}, where the limit for the Boltzmann equation is carried out. Integration in $v$ gives the macroscopic equation
\beq
\label{rho.eps}\e \pa_t \r^\e + \nabla_x\cdot m^\e=0\,,
\eeq
which is closed  in terms of the macroscopic moments. This equation formally  provides the incompressibility condition for $m$ in the limit $\e\rightarrow 0$. Integrating \eqref{kin.equ.class}
against $v$  implies
\beq \label{m.eps.class}
\pa_t m^\e + \frac{1}{\e}\nabla_x \cdot \int_{\mathbb{R}^d} v\otimes v f^\e dv =0\,.
\eeq
 We shall split the second moment as follows
\beq \label{m.eps.class.1}
\pa_t m^\e +\frac{1}{\e}\nabla_x \int_{\mathbb{R}^d} \frac{|v|^2}{d} f^\e dv+ \frac{1}{\e}\nabla_x \cdot \int_{\mathbb{R}^d} \left(v\otimes v-\frac{|v|^2}{d}I\right) f^\e dv =0\,.
\eeq
The second term can be expressed in terms of the macroscopic moments as follows:
\[\int_{\mathbb{R}^d} \frac{|v|^2}{d} f^\e dv =  \int_{\mathbb{R}^d} \left(\frac{|v|^2-d}{d}\right) f^\e dv+\int_{\mathbb{R}^d} f^\e dv=
  \theta^\e+\r^\e   \,, \]
which provides the Boussinesq relation at leading order. The remaining terms of order 1 in the equation for $m$ are of gradient type and therefore correspond to a pressure term, which vanishes when using divergence-free test functions. To analyse the behaviour of the third integral in \eqref{m.eps.class.1} we employ the macro-micro  decomposition
	\[f^\e= M \phi\cdot U_\nu^\e + g_\nu^\e\,,\]
which inserted into into the kinetic equation \eqref{kin.equ.class} formally gives
	\[g_\nu^\e = - \e \frac{v}{\nu} M \cdot \nabla_x (\phi\cdot U_\nu^\e) + O(\e^2) = - \e \frac{v}{\nu} M \cdot  ( \phi\cdot\nabla_x U^\e) + O(\e^2)\,,\]
	since knowing that $g_\nu^\e$ is $O(\e)$, implies that $U_\nu^\e=U^\e +O(\e).$ Now one can see that the macroscopic part of the antisymmetric integral term in \eqref{m.eps.class.1} vanishes and we are left with
	\beqs
	\pa_t m^\e + \frac{1}{\e} \nabla_x (\r^\e +  \theta^\e ) +\frac{1}{\e}
	\nabla_x  \cdot \int_{\mathbb{R}^d}\,
	\left(v\otimes v-\frac{|v|^2}{d}I\right) g_\nu^\e dv  =0\,.
	\eeqs
The leading order term of $g_\nu^\e$ implies
	 \beqs
	&&-\frac{1}{\e}\nabla_x  \int_{\mathbb{R}^d} \left(v\otimes v-\frac{|v|^2}{d}I\right) g_\nu^\e dv
	= \nabla_x\cdot \int_{\mathbb{R}^d} \left(v\otimes v
	-\frac{|v|^2}{d}I\right) \frac{M}{\nu} (v\otimes v \,:\, \nabla_x m^\e) dv + O(\e) \\
	&&\qquad = \mu_0 \nabla_x\cdot (\nabla_xm^\e + (\nabla_xm^\e)^T) +O(\e)  = \mu_0 \Delta_xm^\e +  O(\e) \,,
	\eeqs
	for $\mu_0=\int_{\mathbb{R}^d}v_1^2v_2^2\frac{M}{\nu}dv$, where we have used the incompressiblity condition to leading order. Summarising we obtain from the equation for $m^\e$
	\beq\label{Bouss.eps}
	&&\nabla_x (\r^\e + \theta^\e ) = O(\e)\,,\\
	\label{m.eps}
	&&\pa_t m^\e  =\mu_0 \Delta_x  m^\e+  \nabla_x  p^\e+ O(\e)\,.
	\eeq
	We shall now turn to the equation for the temperature and therefore consider the following moment
	\beq
	\pa_t \int_{\mathbb{R}^d} \frac{|v|^2-(d+2)}{2} f^\e dv + \frac{1}{\e}\nabla_x\cdot \int_{\mathbb{R}^d}  v\, \frac{|v|^2-(d+2)}{2}
 f^\e dv =0\,.
	\eeq
	Note that due to the Boussinesq equation we have 
	\[\int_{\mathbb{R}^d} \frac{|v|^2-(d+2)}{2} f^\e dv=\frac{d}{2}(\theta^\e-\r^\e)=d \theta^\e + O(\e)\,.\]
	The choice of the moment is such that inserting the decomposition into the second integral, the leading term vanishes:
	\beqs
	&&\frac{1}{\e}\nabla_x\cdot \int_{\mathbb{R}^d}  v \frac{|v|^2-(d+2)}{2} f^\e dv  \\
	&&\quad = \frac{1}{\e}\nabla_x\cdot \int_{\mathbb{R}^d}  v \otimes v \frac{|v|^2-(d+2)}{2} M m^\e dv +\frac{1}{\e}\nabla_x\cdot \int_{\mathbb{R}^d}  v \frac{|v|^2-(d+2)}{2} g^\e dv\\
	&&\quad = -\nabla_x\cdot \int_{\mathbb{R}^d}  v\otimes v \,  \frac{|v|^2-(d+2)}{2}  \frac{M}{\nu}   \nabla_x(\phi\cdot  U^\e)  dv +O(\e) \\
	&&\quad = -\nabla_x\cdot \int_{\mathbb{R}^d}  v\otimes v \,  \frac{|v|^2-(d+2)}{2}  \frac{M}{\nu}   \nabla_x\left(\r^\e+\frac{|v|^2-d}{2}\theta^\e\right)  dv +O(\e) \\
&&\quad= -d\kappa_0 \Delta_x  \theta^\e +O(\e)\,,
	\eeqs
	for $\kappa_0=\int_{\mathbb{R}^d} \frac{|v|^2(|v|^2-(d+2))^2}{4d}  \frac{M}{\nu}dv>0$, where we used the Boussinesq relation to leading order. Hence formally we arrive in the limit $\e\rightarrow 0$ at the incompressible Stokes-Fourier system:
	\beqs
	\r +  \theta=0\,,&&  \nabla_x\cdot m=0\\
	\pa_t m  &=& \mu_0 \Delta_x  m + \nabla_x p\\
	\pa_t \theta &=& \kappa_0 \Delta_x  \theta
	\eeqs
	Note that the momentum satisfies a heat equation up to a pressure gradient. This pressure term vanishes when using divergence-free testfunctions,  which are typically used for incompressible fluid dynamical equations.

\subsection{Rescaled equation for fractional Stokes-Fourier limit and function spaces}

As already mentioned in the introduction above it is our aim to analyse the Cauchy problem for the kinetic equation with a rescaling in time of order $\gamma \in (1,2)$:
 \beq
\label{resc.CP}&&\e^\gamma \pa_t f^\e +\e v\cdot \nabla_x f^\e = {\cal L}f^\e\\
&& f^\e (0,v,x)= f^{in}(v,x) \quad \in L_{x,v}^2(M^{-1})\,, \qquad \textnormal{satisfying} \quad \nabla\cdot \int_{\mathbb{R}^d}vf^{in} dv=0\,. \nonumber
\eeq
Note that the latter condition guarantees that  the initial data verifies the  incompressibility condition $\nabla_x \cdot m^{in}=0$.  Here and in the following we denote  weighted $L^2$-spaces as:
\beq\label{def.norm}
\|h\|^2_{L_{t,x,v}^2(\omega)}=\int_0^\infty\int_{\mathbb{R}^{2d}} h^2\, \omega \, dv dx dt\,.
\eeq
The weight functions we are considering in this work will only depend on $v$. To be more precise we will need  the weight functions $M^{-1}, \nu M^{-1}$ and $M$. The spaces $L^2_{x,v}(\omega)$ and $L^2_v(\omega)$ are defined in a similar way, where integration in \eqref{def.norm} is performed over $x,v$ or $v$ respectively.
Also we shall use the abbrevations $L^p_t = L^p(0,\infty)$, $L^p_{x,v}=L^p(\mathbb{R}^d\times \mathbb{R}^d)$ and $L^p_{t,x}=L^p((0,\infty)\times \mathbb{R}^d)$.

The conservation property of ${\cal L}$ implies for the zeroth moment of \eqref{resc.CP} after dividing by $\e$
\beq
\label{equ.rho.eps}\e^{\gamma-1} \pa_t \r^\e + \nabla_x \cdot m^\e=0\,,
\eeq
which provides again the incompressibility condition to leading order. 
Using the same macro-micro decomposition as above, we obtain for the first and second moment similar to before
\beqs
\pa_t m^\e + \e^{1-\gamma}\nabla_x (\r^\e+\theta^\e) &=& \e^{2-\gamma}\nabla_x \cdot \int_{\mathbb{R}^d} \left(v \otimes v -\frac{|v|^2}{d}I\right)\frac{v}{\nu} M \cdot \nabla_x(v\cdot m^\e) dv +O(\e)\,,\\
\pa_t \theta^\e&=&\e^{2-\gamma}\nabla_x\cdot \int_{\mathbb{R}^d}  \frac{ |v|^2(|v|^2-(d+2))^2}{4d}\frac{M}{\nu} dv \nabla_x\theta^\e  +O(\e)\,.
\eeqs
If we consider the fractional Stokes limit, then either the 2nd or the 6th moment of $M/\nu$ will be unbounded, but in such a way that it is balanced by the order $\e^{2-\gamma}$ in the limit $\e\rightarrow 0$. Considering the fractional Stokes limit (i.e. there is no equation for $\theta$) requires the 4th moment to be unbounded. This also explains why we cannot derive a fractional Stokes-Fourier system with a fractional Laplacian appearing in both equations for $m$ and $\theta$. 

We shall also note that the scaling $\gamma=1$ corresponds to the scaling for the acoustic limit. 

\subsection{Summary of the assumptions and results}

\begin{ass}\label{ass.par}[Assumptions on the parameters for the fractional Fourier-Stokes limit]
\begin{itemize}
\item[(i)] For the case of heavy-tailed equilibrium distributions $\tilde M$ we shall make the following assumptions on the parameters $\alpha, \tilde \beta$ determining the behaviour of $\tilde M$ and the corresponding collision frequency $\tilde \nu$ for large $|v|$ (see \eqref{M.tail} and \eqref{nu.tail}):
    
Let $\alpha >5$ and $\tilde \beta <1$ satisfy
\beq
5<\alpha +\tilde \beta < 6\,, \qquad \tilde \beta < \frac{ \alpha-4}{2}\,.
\eeq
The parameter $\tilde \gamma$ used for the rescaling in time then satisfies
\[\tilde\gamma = \frac{\alpha - \tilde\beta -4}{1-\tilde\beta}  \ \in \ (1,2)\,.\]
Observe that this also includes a velocity independent collision frequency $\tilde\nu(v)\equiv 1$. In this case the requirements on the parameters are
\beqs
\tilde\beta=0\,, \qquad \alpha = 5 + \delta \quad \textnormal{for} \quad \delta \in (0,1)\,, \qquad \tilde\gamma =1+\delta\,.
\eeqs
\item[(ii)] For the Gaussian equilibrium distributions $ M^*$ the collision frequency $\nu^*$ is degenerate as $|v|\rightarrow 0$ with exponent $\beta^*>1$, see \eqref{nu.gauss}. For this exponent $\beta^*$ and the corresponding parameter $\gamma^*$ for the rescaling in time we assume
\beqs
d+2<\beta^*<d+3\,, \qquad \gamma^* =\frac{\beta^*+d}{\beta^* -1} \ \  \in \ (1,2)\,.
\eeqs
\end{itemize}
\end{ass}
These conditions stated in Assumption \ref{ass.par} imply for the heavy-tailed equilibrium distribution the following integrability properties
\beq
\label{tail.mom}
\int_{\mathbb{R}^d} \frac{|v|^k}{\tilde \nu}\tilde M dv \leq C \ \ (k\leq 5),\qquad \int_{\mathbb{R}^d}\frac{|v|^6}{\tilde \nu}\tilde M dv=\infty \,,\eeq
whereas for the Gaussian equlibrium distribution the unboundedness occurs at the lowest order
\beq
\label{gauss.mom}
\int_{\mathbb{R}^d} \frac{|v|^2}{\nu^*} M^* dv =\infty,\qquad \int_{\mathbb{R}^d}\frac{|v|^j}{\nu^*} M^* dv\leq C \ \ (j\geq 3) \,.
\eeq
If in the following the statements do hold for both cases of equilibrium distributions in Assumption~\ref{ass.par}  we write $(M,\gamma)$, which can be either $(\tilde M,\tilde \gamma)$ or $(M^*,\gamma^*)$. 
\begin{thm}\label{thm.1}
Let Assumption \ref{ass.par} hold. Then the solution $f^{\e}$ to \eqref{resc.CP} converges as $\e\rightarrow 0$ to
\beq\label{ee.theta}
f^{\e}(t,x,v) \rightharpoonup^* f(t,x,v) = M \left(v\cdot m(x) + \frac{|v|^2-(d+2)}{2}\theta(t,x)\right)
\quad \textnormal{in} \quad L^\infty(0,T;L^2_{x,v}(\nu M^{-1}))\,,
\eeq
where the macroscopic quantities are determined by
\beqs
&&m(x) = m^{in}(x),  
 \\
&&\pa_t \theta  = -\kappa (-\Delta)^{\gamma/2} \theta, \qquad \theta (0,x)= \theta^{in}(x)\,,
\eeqs
for a positive constant $\kappa>0$, where the equations are understood in the weak sense. In particular $\pa_t m=0$ holds modulo gradients, i.e. for divergence-free testfunctions.  The initial data 
\[U^{in}=\int_{\mathbb{R}^d}\zeta(v)f^{in}(x,v) dv\]
is hereby assumed to satisfy
\[\nabla_x \cdot m^{in}(x)=0, \quad \r^{in}(x)+\theta^{in}(x)=0\,.\]
\end{thm}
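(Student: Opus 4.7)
My strategy is to implement the moments method of \cite{M,BMP} adapted to the BGK operator \eqref{def.L}-\eqref{def.K} which conserves mass, momentum and energy. In a preliminary step I would derive the basic a priori estimates by multiplying \eqref{resc.CP} by $f^\e/M$ and integrating in $x,v$; using \eqref{cons.L} and \eqref{def.K} this yields the uniform bounds
\beqs
\|f^\e\|_{L^\infty_t L^2_{x,v}(M^{-1})} \le \|f^{in}\|_{L^2_{x,v}(M^{-1})}\,, \qquad \|f^\e-{\cal K}f^\e\|_{L^2_{t,x,v}(\nu M^{-1})} \le C\e^{\gamma/2}\,.
\eeqs
By weak-$*$ compactness, along a subsequence $f^\e\rightharpoonup^* f$ in the relevant space, and the dissipation estimate forces $f\in\ker{\cal L}$, so $f = M\phi\cdot U$ for some macroscopic profile $U=(\r,m,\theta)^T$. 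Passing to the limit in \eqref{equ.rho.eps} yields the incompressibility $\nabla_x\cdot m=0$; dividing the first-moment equation by $\e^{1-\gamma}$ before sending $\e\to 0$ produces $\nabla_x(\r+\theta)=0$, and together with the initial-data compatibility this gives the Boussinesq relation $\r+\theta\equiv 0$ and the ansatz~\eqref{ee.theta}.

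To identify the time evolution of $U$ I would test the weak form of \eqref{resc.CP} against $\Phi^\e_j = \phi_j(v)\psi(t,x) + \chi^\e_j(t,x,v)$, with $\psi\in C^\infty_c$ (divergence-free whenever $\phi_j = v$) and $\chi^\e_j$ an auxiliary corrector solving
\beqs
\nu\,\chi^\e_j + \e\,v\cdot\nabla_x\chi^\e_j = \e\,v\cdot\nabla_x(\phi_j\psi)
\eeqs
up to a projection enforcing the conservation structure of ${\cal L}$. Once the microscopic contribution is discarded via the dissipation estimate, the surviving macroscopic term reduces, after Fourier transform in $x$, to the moment
\beqs
\e^{2-\gamma}\int_{\R^d} v\otimes v\,\phi_j(v)\,\frac{M(v)}{\nu(v)+i\e\,v\cdot\xi}\,dv\,,
\eeqs
whose scaling in $\xi$ is $|\xi|^\gamma$ precisely when the underlying velocity integral fails to be integrable at the borderline rate. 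For the temperature component $\phi_3=(|v|^2-d)/2$ the relevant integral $\int|v|^6 M/\nu\,dv$ diverges, see \eqref{tail.mom}-\eqref{gauss.mom}, and this singularity is balanced exactly by the $\e^{2-\gamma}$ prefactor, producing the fractional diffusion operator $-\kappa(-\Delta)^{\gamma/2}\theta$ in the limit with a positive constant $\kappa$. For the momentum component $\phi_2=v$ the analogous integral $\int|v|^4 M/\nu\,dv$ is \emph{finite} under Assumption~\ref{ass.par}; hence this contribution is of order $\e^{2-\gamma}$ and vanishes in the limit, leaving only a pressure-like gradient that disappears against divergence-free test functions, whence $\pa_t m = 0$ in the weak sense and $m(t)\equiv m^{in}$.

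The main obstacle will be the rigorous construction of the correctors $\chi^\e_j$ in appropriate weighted spaces and the identification of the limit of the associated moment as a fractional Laplacian with the correct constant. This requires splitting the velocity integration into the transport-dominated region $\nu(v)\ll\e|v\cdot\xi|$ and the collision-dominated region $\nu(v)\gtrsim\e|v\cdot\xi|$, controlling both contributions uniformly in $\xi$, and handling the coupling between $\r,m,\theta$ introduced by the non-diagonal matrix $A$ in \eqref{U.nu}. A further delicate point is the asymmetry between the temperature and momentum contributions: the precise matching between the parameters $\alpha,\tilde\beta,\tilde\gamma$ (or $\beta^*,\gamma^*$) in Assumption~\ref{ass.par} and the divergent moment $\int|v|^6 M/\nu\,dv$ in \eqref{tail.mom}-\eqref{gauss.mom} is exactly what places fractional diffusion at the temperature level alone, and this delicate bookkeeping has to be tracked consistently through the entire moments computation.
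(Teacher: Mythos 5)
Your outline follows the same architecture as the paper (energy estimate giving $\|f^\e-{\cal K}f^\e\|_{L^2_{t,x,v}(\nu M^{-1})}\le C\e^{\gamma/2}$, identification of the weak-$*$ limit as $M\phi\cdot U$, incompressibility and Boussinesq from the zeroth and first moments, and an auxiliary corrector to extract the diffusive term), but it diverges at the decisive technical step. The paper works entirely in physical space: it tests against $\psi_i\chi_i^\e$ where $\chi^\e$ solves $\nu\chi^\e-\e v\cdot\nabla_x\chi^\e=\nu\varphi$, reduces the right-hand side to $\e^{-\gamma}\int\psi_i M\,\phi\cdot U_\nu^\e\,\nu(\chi_i^\e-\varphi_i)$, and identifies the fractional Laplacian through the explicit change of variables $w=\e v/|v|^{\tilde\beta}$ (resp.\ $w=\e v/|v|^{\beta^*}$), landing directly on the principal-value singular-integral representation of $(-\Delta)^{\gamma/2}$ with the constant $\kappa$ computed as a Gamma-function integral. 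You instead propose to read off the symbol $\int v\otimes v\,\phi_j\,M/(\nu+i\e v\cdot\xi)\,dv\sim|\xi|^\gamma$ after Fourier transform, which is the route of \cite{MMM} rather than the moments method of \cite{M,BMP} that you announce. This is viable for the present constant-coefficient linear problem, but you inherit the task of inverting the coupled system through $A^{-1}$ in Fourier variables and of proving uniformity in $\xi$ of the symbol asymptotics; the paper's physical-space route avoids both and is the reason the authors can quote the $L^2_{t,x}$-convergence lemmas of \cite{BMP} almost verbatim.

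One concrete inaccuracy: you attribute the fractional diffusion in the temperature equation to the divergence of $\int|v|^6M/\nu\,dv$ and cite \eqref{tail.mom}--\eqref{gauss.mom} for both regimes. This is correct only for the heavy-tailed case. For the Gaussian equilibrium, \eqref{gauss.mom} states that all moments of order $j\ge 3$ of $M^*/\nu^*$ are \emph{finite} and that it is $\int|v|^2M^*/\nu^*\,dv$ which diverges, due to the degeneracy of $\nu^*$ at $v=0$. In the paper's decomposition of the temperature moment the roles of the two candidate terms are interchanged between the two regimes: for $\tilde M$ the fractional operator comes from the $|v|^4\theta_\nu^\e$ term (large velocities), while for $M^*$ it comes from the term proportional to $(d+2)(-\r_\nu^\e+\tfrac d2\theta_\nu^\e)$ (small velocities), which the Boussinesq relation again converts into a multiple of $\theta$. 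Your bookkeeping as written would miss the Gaussian case; the rest of the argument, including the vanishing of the momentum diffusion because $\int(|v|^3+|v|^5)M/\nu\,dv<\infty$, is consistent with the paper.
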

The derivation of this theorem shows that one cannot obtain a fractional derivative in all moments at the same time, since the chosen time scale is not the right one for the diffusive terms in the momentum equation. For the sake of completeness we shall recall here that the fractional Laplacian can be defined using the Fourier Transform 
\beqs
{\cal F}((-\Delta_x)^{\gamma/2}h)(k)=|k|^{\gamma}{\cal F}(h)(k)\,.
\eeqs
We will rather use the following alternative representation as a singular integral
\beqs
(-\Delta_x)^{\gamma/2}h=C_{d,\gamma}PV\int_{\mathbb{R}^d}\frac{h(x)-h(y)}{|x-y|^{d+\gamma}}dy\,,
\eeqs
see e.g. also \cite{N}.
\begin{ass}
\label{ass.par.2}[Assumptions on the parameters for the fractional Stokes system without temperature]
We shall here only consider the case of heavy-tailed equilibrium distributions $\tilde M$ with corresponding collision frequency $\tilde \nu$. For the parameters $\alpha$ and $\tilde \beta$ (see \eqref{M.tail} and \eqref{nu.tail}) we  make the following assumptions:

Let $\alpha >3$ and $\tilde \beta <1$ satisfy
\beq
3<\alpha +\tilde \beta < 4\,, \qquad \tilde \beta < \frac{\alpha-2}{2}\,.
\eeq
The parameter used for the rescaling in time then satisfies
\[\tilde\gamma = \frac{\alpha - \tilde\beta -2}{1-\tilde\beta}  \ \in \ (1,2)\,.\]
Again this includes the case $\tilde \nu \equiv 1$ with the choice of parameters
\[\tilde \beta=0 \,, \qquad \alpha = 3 + \delta \quad \textnormal{for} \quad \delta \in (0,1)\,, \qquad  \tilde \gamma = 1+\delta \,.\]
\end{ass}
The corresponding conditions to \eqref{tail.mom} for these heavy-tailed equilibrium distribution read
\beq
\label{tail.mom.2}
\int_{\mathbb{R}^d} \frac{|v|^k}{\tilde \nu}\tilde M dv \leq C \ \ (k\leq 3),\qquad \int_{\mathbb{R}^d}\frac{|v|^4}{\tilde \nu}\tilde M dv=\infty\,.
\eeq

\begin{thm}\label{thm.2}
Let Assumption \ref{ass.par.2} hold. Then the solution $f^{\e}$ to \eqref{resc.CP} converges as $\e\rightarrow 0$ to
\beq\label{ee.theta}
f^{\e}(t,x,v) \rightharpoonup^* f(t,x,v) = M (\r(x) + v\cdot m(t,x))
\quad \textnormal{in} \quad L^\infty(0,T;L_{x,v}^2(\nu M^{-1}))\,,
\eeq
where the macroscopic quantities solve
\beqs
&&\r(x)\ = \ \r^{in}(x)\,,\\
&&  \nabla\cdot m=0\,, \\
&&\pa_t m = -\kappa (-\Delta)^{\tilde\gamma/2} m +\nabla_x p\,, \qquad m (0,x)= m^{in}(x)\,
\eeqs
where the equation for the evolution of $m$ holds in the weak sense. The pressure term $p\in L^2_{t,x}$ vanishes when using divergence-free testfunctions. The initial data $\bar U^{in}=\int_{\mathbb{R}^d}\bar \phi f^{in} dv$ is assumed to satisfy $\nabla\cdot m^{in}=0$.
\end{thm}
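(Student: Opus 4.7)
The plan is to follow the moments method of Mellet \cite{M} and Ben Abdallah--Mellet--Puel \cite{BMP}, specialised to the vector $\bar\phi=(1,v)^T$ for which $\bar A$ is diagonal by radial symmetry of $\tilde M$. The argument proceeds in three stages: weak compactness from the a priori estimates of Section 2, identification of the incompressibility constraint, and derivation of the fractional momentum equation via an auxiliary test function.

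From the a priori results of Section 2, the family $f^\varepsilon$ is uniformly bounded in $L^\infty_t L^2_{x,v}(\tilde M^{-1})$ together with a dissipation estimate of the form
$$\frac{1}{\varepsilon^{\tilde\gamma}} \bigl\| f^\varepsilon - \tilde M \bar\phi \cdot \bar U^\varepsilon_\nu \bigr\|^2_{L^2_{t,x,v}(\tilde\nu \tilde M^{-1})} \leq C.$$
Along a subsequence $f^\varepsilon \rightharpoonup^* f$, and the dissipation forces the limit to lie in $\ker \mathcal{L}$ restricted to mass--momentum conservation, hence $f(t,x,v) = \tilde M(v)(\rho(t,x) + v\cdot m(t,x))$. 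Integrating \eqref{resc.CP} against $1$ and dividing by $\varepsilon$ gives $\varepsilon^{\tilde\gamma-1}\partial_t \rho^\varepsilon + \nabla_x \cdot m^\varepsilon = 0$, so in the weak-$*$ limit $\nabla_x \cdot m = 0$ since $\tilde\gamma>1$. The static identification $\rho(t,x) = \rho^{in}(x)$ reflects the fact that the anomalous time scale $\varepsilon^{\tilde\gamma}$ is much slower than the acoustic scale $\varepsilon$: testing the continuity equation against time-independent $\varphi(x)$ and using $\nabla_x\cdot m = 0$ together with the initial-data trace yields $\partial_t\rho = 0$ in the limit.

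The heart of the proof is the passage to the limit in the momentum equation. For a divergence-free $\psi \in C^\infty_c([0,T)\times\mathbb{R}^d; \mathbb{R}^d)$ I introduce the auxiliary function
$$\chi^\varepsilon(t,x,v) = \int_0^\infty e^{-s}\, v \cdot \psi\!\left(t,\, x + \frac{\varepsilon v s}{\tilde\nu(v)}\right) ds,$$
which satisfies $\tilde\nu(v)\chi^\varepsilon - \varepsilon v\cdot\nabla_x \chi^\varepsilon = \tilde\nu(v)\, v\cdot\psi(t,x)$. Testing \eqref{resc.CP} against $\chi^\varepsilon$, integrating by parts in $t$ and $x$, and using the cancellation between the transport and collision terms together with $\nabla_x\cdot\psi=0$ and the diagonal structure of $\bar A$, leads to an identity of the form
$$-\!\!\int_0^T\!\!\!\int_{\mathbb{R}^d} m^\varepsilon\cdot \partial_t\psi\, dx\, dt - \int_{\mathbb{R}^d} m^{in}\cdot\psi(0,\cdot)\, dx + \int_0^T\!\!\!\int_{\mathbb{R}^{2d}} f^\varepsilon\,\tilde\nu\,\frac{v\cdot\psi - \chi^\varepsilon}{\varepsilon^{\tilde\gamma}}\, dv\, dx\, dt = R^\varepsilon,$$
with $R^\varepsilon \to 0$ thanks to the dissipation bound. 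Substituting the weak-$*$ limit $f = \tilde M(\rho + v\cdot m)$ for $f^\varepsilon$, the $\rho$-contribution vanishes by oddness in $v$, and the problem reduces to showing
$$\frac{1}{\varepsilon^{\tilde\gamma}} \int_{\mathbb{R}^d} \tilde\nu(v)\,\tilde M(v)\, v\otimes\bigl[v\cdot\psi(t,x) - \chi^\varepsilon(t,x,v)\bigr] dv \ \xrightarrow[\varepsilon\to 0]{}\ \kappa\,(-\Delta_x)^{\tilde\gamma/2}\psi(t,x)$$
for a positive constant $\kappa$.

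The main obstacle is precisely this singular-integral limit. Performing the change of variable $w = \varepsilon v s/\tilde\nu(v)$ in the heavy-tail region $|v|\ge 1$ and using $\tilde M(v) \sim c_0 |v|^{-(\alpha+d)}$, $\tilde\nu(v)\sim|v|^{\tilde\beta}$, the integrand becomes a convolution kernel behaving like $|x-y|^{-(d+\tilde\gamma)}$ with precisely $\tilde\gamma = (\alpha-\tilde\beta-2)/(1-\tilde\beta)\in(1,2)$ as dictated by Assumption \ref{ass.par.2}. The divergence $\int |v|^4 \tilde M/\tilde\nu\, dv = \infty$ in \eqref{tail.mom.2} is exactly what prevents a classical Laplacian and generates the nonlocal operator, while the finite moments up to order three from \eqref{tail.mom.2} provide uniform control on the remainder and enable a dominated convergence argument. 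The bounded-velocity region $|v|\le 1$ contributes only lower-order corrections. Finally, the pressure $p \in L^2_{t,x}$ is recovered via de Rham's theorem from the fact that the weak formulation for $m$ holds modulo gradients.
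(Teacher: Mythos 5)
Your overall strategy is the one the paper uses: the moments method with the auxiliary function $\chi^\varepsilon$ solving $\nu\chi^\varepsilon-\varepsilon v\cdot\nabla_x\chi^\varepsilon=\nu\,(v\cdot\psi)$, the a priori/dissipation bounds of Section 2 to identify the limit $f=\tilde M(\rho+v\cdot m)$ and the incompressibility, and the change of variables $w=\varepsilon v/|v|^{\tilde\beta}$ turning the heavy tail into the kernel $|w|^{-(d+\tilde\gamma)}$. (Packaging the $d$ scalar test functions $\varphi_i$ into one vector field $\psi$ and one auxiliary function is only a cosmetic difference from Lemma \ref{conv.stokes}.) However, there is one genuine gap, and it sits exactly where the vector (Stokes) case differs from the scalar fractional-diffusion results of \cite{M,BMP}: the velocity integral you must control is
\begin{equation*}
\frac{1}{\varepsilon^{\tilde\gamma}}\int_{\mathbb{R}^d}\tilde\nu\,\tilde M\,v_j\,v_k\int_0^\infty e^{-s}\Bigl(\psi_k\bigl(t,x+\tfrac{\varepsilon v s}{\tilde\nu}\bigr)-\psi_k(t,x)\Bigr)\,ds\,dv\,,
\end{equation*}
and after the change of variables its kernel is the \emph{anisotropic} one $\frac{w_jw_k}{|w|^{2}}\,|w|^{-(d+\tilde\gamma)}$, not the scalar $|w|^{-(d+\tilde\gamma)}$ you quote. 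Its limit is a L\'evy-type operator whose Fourier symbol is $c_1|k|^{\tilde\gamma}\delta_{jk}+c_2|k|^{\tilde\gamma}k_jk_k/|k|^2$; it reduces to $\kappa(-\Delta)^{\tilde\gamma/2}\psi_j$ only modulo a gradient, i.e.\ only after invoking divergence-freeness. The paper does real work here (the integration by parts in $w$ producing $\bar L_1^\varepsilon,\bar L_2^\varepsilon,\bar b^\varepsilon$ in Lemma \ref{conv.stokes}, with $\bar L_2^\varepsilon\to0$ precisely because $\nabla_x\cdot m_\nu^\varepsilon\rightharpoonup0$); in your dual formulation you could instead use $\nabla_x\cdot\psi=0$, but the reduction must be carried out --- asserting that ``the integrand becomes a convolution kernel behaving like $|x-y|^{-(d+\tilde\gamma)}$'' skips the step that distinguishes this theorem from the scalar case.

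Two smaller points. First, your claim that the $\rho$-contribution ``vanishes by oddness in $v$'' is wrong: to leading order it involves the \emph{even} second moment and equals $\varepsilon^{1-\tilde\gamma}\int\!\!\int\rho_\nu^\varepsilon\,\nabla_x\cdot\psi\,dx\,dt$; this is exactly the singular pressure term, and it disappears only because $\psi$ is divergence-free (and this must be argued at fixed $\varepsilon$, before passing to the limit --- one cannot substitute the weak-$*$ limit into a term carrying the factor $\varepsilon^{1-\tilde\gamma}$). Second, the exact weak formulation obtained from the conservation law carries $\mathcal{K}f^\varepsilon=\tilde M\bar\phi\cdot\bar U^\varepsilon_\nu$ in the singular term, not $f^\varepsilon$; replacing $f^\varepsilon$ by its weak-$*$ limit inside $\varepsilon^{-\tilde\gamma}\int\tilde\nu f^\varepsilon(v\cdot\psi-\chi^\varepsilon)$ is not legitimate as written, because $\varepsilon^{-\tilde\gamma}\tilde\nu(v\cdot\psi-\chi^\varepsilon)$ does not converge strongly in the relevant weighted space (pointwise in $v$ it is of order $\varepsilon^{1-\tilde\gamma}\to\infty$). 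The paper's route --- reduce exactly to $\bar U^\varepsilon_\nu$ via \eqref{cons.L}, prove strong $L^2_{t,x}$ convergence of the $v$-integrated kernel, then pair with the weakly convergent moments --- is the step your $R^\varepsilon\to0$ assertion needs to be replaced by.
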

In this regime the fractional diffusion only appears in the equation for the momentum, whereas the density does not change with time. This resembles well the Navier-Stokes equations, where the density (and temperature) are assumed to be constant and the continuity equation reduces to the incompressibility condition. 
\begin{rem} The reason why the fractional Stokes limit cannot be carried out for the Gaussian equilibrium distribution is that in this case the fractional derivative arises from the unbounded second moment of $M/\nu$ and therefore appears for the density term. In the case of the Stokes-Fourier system the Boussinesq equation then relates the density to the temperature. In the Stokes limit however no such relation is available. 
\end{rem}

\section{A priori estimates and the Cauchy problem}

\subsection{Integrability conditions on $M$}
The above Assumptions \ref{ass.par} and \ref{ass.par.2}  on the parameters determining the behaviour of $M$ and $\nu$ guarantee the boundedness of the moments required for carrying out the macroscopic limit. We summarise these integrability conditions  in the following Lemma:
\begin{lem}
Let $(M,\nu)$ be either given by $(\tilde M,\tilde \nu)$ or $(M^*, \nu^*)$. In both cases we assume that the corresponding conditions on the parameters stated in Assumption \ref{ass.par} are satisfied. Then the following integrability conditions hold
\beq
\label{int.M.1} \int_{|v|\geq \delta} \frac{|v|^{2} M(v)}{\nu(v)} dv \leq C\,, \qquad \int_{\mathbb{R}^d} \frac{|v|^{j+3} M(v)}{\nu(v)} dv \leq C \qquad \textnormal{for} \  0\leq j\leq 2\,, \\
\label{int.M.2}  \int_{\mathbb{R}^d}  |v|^k \nu^2(v) M(v) dv \leq C\,, \qquad  \int_{\mathbb{R}^d}  |v|^k \nu(v) M(v) dv \leq C \qquad   \textnormal{for} \  0\leq k\leq 4\,,
\eeq
where $\delta =0$ in the case of heavy-tailed equilibrium distributions, and $0<\delta =1$ (w.l.o.g.) in the case of the Gaussian equilibrium distributions.
\end{lem}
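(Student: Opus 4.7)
The plan is to split each integral into contributions from $\{|v|\leq 1\}$ and $\{|v|\geq 1\}$ and carry out a direct exponent count in polar coordinates, using the explicit power-law or Gaussian form of $M$ and $\nu$ on each region, while invoking boundedness from above and below on the complementary region. The restriction $|v|\geq\delta$ in the first integral of \eqref{int.M.1} will play a role only in the Gaussian case.

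For the heavy-tailed case $(\tilde M,\tilde\nu)$, I insert $\tilde M(v)=c_0|v|^{-(\alpha+d)}$ and $\tilde\nu(v)=|v|^{\tilde\beta}$ on $\{|v|\geq 1\}$. The integrands of \eqref{int.M.1} reduce in polar variables to $r^{k-\tilde\beta-\alpha-1}\,dr$, which is integrable at infinity precisely when $k<\alpha+\tilde\beta$. The binding case is $k=j+3=5$, matched exactly by $\alpha+\tilde\beta>5$ in Assumption \ref{ass.par}(i). For \eqref{int.M.2} the integrands become $r^{k+j\tilde\beta-\alpha-1}\,dr$; the tight case is $j=2$, $k=4$, giving $\tilde\beta<(\alpha-4)/2$, which is the remaining inequality of Assumption \ref{ass.par}(i). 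On $\{|v|\leq 1\}$ both $\tilde M$ and $\tilde\nu$ are bounded above and below by positive constants, so every integrand is bounded over the ball and contributes a finite amount.

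For the Gaussian case $(M^*,\nu^*)$, the region $\{|v|\geq 1\}$ is harmless in every estimate since $M^*$ has Gaussian decay while $\nu^*$ is bounded above and below there. On $\{|v|\leq 1\}$ the only potentially singular object is $|v|^k M^*/\nu^*$, which behaves like $r^{k-\beta^*+d-1}\,dr$ in polar coordinates, integrable near $0$ iff $k>\beta^*-d$. The smallest value that must be controlled in the second estimate of \eqref{int.M.1} is $k=3$, which yields exactly the upper bound $\beta^*<d+3$ of Assumption \ref{ass.par}(ii); the value $k=2$ would fail because $\beta^*>d+2$, and this is the precise reason the first estimate of \eqref{int.M.1} is stated only on $\{|v|\geq\delta\}$ with $\delta=1$ in the Gaussian case. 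The integrands in \eqref{int.M.2} involve positive powers $(\nu^*)^j$, $j\in\{1,2\}$, which only improve the behaviour at the origin, so no singularity arises.

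There is no genuine conceptual obstacle: the lemma is essentially a bookkeeping exercise verifying that the algebraic inequalities on $(\alpha,\tilde\beta)$ and on $\beta^*$ in Assumption \ref{ass.par} are exactly calibrated to the moment counts in \eqref{int.M.1}--\eqref{int.M.2}. The one point worth flagging in the write-up is the different status of the parameter $\delta$: it may be taken $\delta=0$ in the heavy-tailed case because $\tilde\nu$ is bounded away from zero at the origin, whereas $\delta=1$ is forced in the Gaussian case by $\beta^*>d+2$, which would otherwise make $|v|^2 M^*/\nu^*$ non-integrable near zero.
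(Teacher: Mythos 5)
Your verification is correct and is exactly the routine exponent count that the paper itself omits (the lemma is stated there without proof as a direct consequence of Assumptions \ref{ass.par}); in particular you correctly identify that the cut-off $\delta=1$ in the Gaussian case is forced by $\beta^*>d+2$ making $|v|^2M^*/\nu^*$ non-integrable at the origin, while $\delta=0$ is admissible for heavy tails since $\tilde\nu$ is bounded below near $v=0$. No gaps.
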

If only the conservation of mass and momentum hold, the order of integrable moments reduces as follows:
\begin{lem}
For the heavy-tailed equilibrium distributions satisfying Assumption \ref{ass.par.2} the integrability conditions \eqref{int.M.1} hold for $j=0$ and \eqref{int.M.2} is satisfied for $0\leq k\leq 2$.
\end{lem}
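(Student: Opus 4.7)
The plan is to split each integral into contributions from small and large velocities and use the explicit power-law behaviour of $\tilde M$ and $\tilde \nu$ on the unbounded piece, exactly as one would in the companion lemma for the Fourier-Stokes regime. Near the origin, $\tilde M$ is smooth and bounded and $\tilde\nu$ is smooth and bounded from above and below by positive constants, so each of the integrands in \eqref{int.M.1} and \eqref{int.M.2} restricted to $\{|v| \le 1\}$ is a bounded continuous function on a compact set and contributes a finite amount. Note that the cut-off $|v|\ge\delta$ appearing in \eqref{int.M.1} is vacuous here since $\delta=0$ for heavy-tailed equilibria.

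On the tail $|v|\ge 1$ one has $\tilde M(v)=c_0|v|^{-\alpha-d}$ and $\tilde\nu(v)=|v|^{\tilde\beta}$, so in spherical coordinates every integrand reduces to a one-dimensional power and convergence follows from an elementary exponent count. For \eqref{int.M.1} with $j=0$ the relevant integrand is $|v|^p\tilde M/\tilde\nu\sim|v|^{p-\alpha-d-\tilde\beta}$ with $p\in\{2,3\}$, which is integrable at infinity precisely when $p<\alpha+\tilde\beta$. Since Assumption \ref{ass.par.2} imposes $\alpha+\tilde\beta>3$, both values $p=2,3$ are covered.

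For \eqref{int.M.2} the two tail integrands are $|v|^k\tilde\nu^2\tilde M\sim|v|^{k+2\tilde\beta-\alpha-d}$ and $|v|^k\tilde\nu\tilde M\sim|v|^{k+\tilde\beta-\alpha-d}$, so integrability at infinity requires $k<\alpha-2\tilde\beta$ and $k<\alpha-\tilde\beta$ respectively. The first, in the worst case $k=2$, is exactly the hypothesis $\tilde\beta<(\alpha-2)/2$; the second, for $k=2$, is a consequence of $\alpha>3$ and $\tilde\beta<1$, giving $\alpha-\tilde\beta>2$. Since the constraints are monotone in $k$, the cases $k=0,1$ follow immediately, completing the claim.

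The whole argument is pure bookkeeping and no substantial obstacle is expected; the only point deserving attention is simply tracking \emph{which} exponent drops out of the stronger hypotheses of Assumption \ref{ass.par}, namely that replacing $\alpha+\tilde\beta>5$, $\tilde\beta<(\alpha-4)/2$ by their weaker analogues $\alpha+\tilde\beta>3$, $\tilde\beta<(\alpha-2)/2$ costs exactly two orders of moments, which is why \eqref{int.M.1} now holds only for $j=0$ and \eqref{int.M.2} only up to $k=2$.
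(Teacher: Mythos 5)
Your proof is correct, and since the paper states this lemma without any proof (treating it as a routine consequence of Assumption \ref{ass.par.2}), your small-/large-velocity splitting with the exponent count $p<\alpha+\tilde\beta$, $k<\alpha-2\tilde\beta$, $k<\alpha-\tilde\beta$ is exactly the intended verification. The identification of $\tilde\beta<(\alpha-2)/2$ as the binding constraint for $k=2$ in \eqref{int.M.2} and of $\alpha+\tilde\beta>3$ for the $|v|^3/\tilde\nu$ moment matches the paper's bookkeeping in \eqref{tail.mom.2}.
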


\subsection{A priori estimates and well-posedness}

\begin{lem}\label{lem.cont.L}
Let the equilibrium distribution $M$ satisfy Assumption \ref{ass.par} or \ref{ass.par.2}, then
\[\|\nu {\cal K}f\|_{L_v^2(M^{-1})}\leq C \|f\|_{L_v^2(M^{-1})}\,.\]
\end{lem}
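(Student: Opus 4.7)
The plan is to reduce the bound to a finite-dimensional estimate on the macroscopic vector $U_{\nu,f}$ and then invoke the moment bounds \eqref{int.M.2} from the previous lemma.

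First I would write out the integrand explicitly: from \eqref{def.K},
\[
\nu \mathcal{K} f \;=\; \nu(v)\, M(v)\, \phi(v)\cdot U_{\nu,f},
\]
so that
\[
\|\nu \mathcal{K} f\|^2_{L^2_v(M^{-1})} \;=\; \int_{\mathbb{R}^d} \nu(v)^2 M(v)\, \bigl(\phi(v)\cdot U_{\nu,f}\bigr)^2 dv \;\le\; |U_{\nu,f}|^2 \int_{\mathbb{R}^d} \nu(v)^2 M(v)\, |\phi(v)|^2\, dv.
\]
Since $|\phi(v)|^2 \le C(1+|v|^4)$, the remaining integral is finite by \eqref{int.M.2} (for Assumption~\ref{ass.par}) or its analogue (for Assumption~\ref{ass.par.2}, where $\bar\phi$ only involves polynomials up to degree one, so moments up to $k=2$ suffice). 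Hence it remains to estimate $|U_{\nu,f}|$ in terms of $\|f\|_{L^2_v(M^{-1})}$.

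For this, I use the defining relation \eqref{U.nu}, $A\, U_{\nu,f} = \int_{\mathbb{R}^d} \nu \phi f\, dv$, together with invertibility of $A$ (noted after \eqref{cons.L}). A Cauchy--Schwarz estimate with the splitting $\nu\phi\, f = (\nu\phi M^{1/2})\cdot (f M^{-1/2})$ gives
\[
\Bigl|\int_{\mathbb{R}^d} \nu \phi f\, dv\Bigr| \;\le\; \Bigl(\int_{\mathbb{R}^d} \nu^2 |\phi|^2 M\, dv\Bigr)^{1/2}\, \|f\|_{L^2_v(M^{-1})},
\]
and the first factor is again finite by \eqref{int.M.2}. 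Multiplying by $\|A^{-1}\|$ yields $|U_{\nu,f}| \le C\, \|f\|_{L^2_v(M^{-1})}$, and combining with the previous display gives the claim.

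There is no substantial obstacle; the only thing to watch is that $|\phi|^2$ contains a $|v|^4/4$ term, so the bound requires the fourth-moment estimate $\int \nu^2 |v|^4 M\, dv < \infty$, which is precisely the top-order case $k=4$ of \eqref{int.M.2} in the Stokes--Fourier setting. In the Stokes-only setting one works with $\bar\phi$ and needs only $k=2$, which is again available. So the proof is a routine Cauchy--Schwarz together with the invertibility of $A$ and the moment bounds already established.
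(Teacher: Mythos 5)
Your proof is correct and follows essentially the same route as the paper's: bound $\|\nu\mathcal{K}f\|^2_{L^2_v(M^{-1})}$ by $C|U_{\nu,f}|^2$ using the moment bounds \eqref{int.M.2}, then estimate $|U_{\nu,f}|$ via the relation $A\,U_{\nu,f}=\int\nu\phi f\,dv$, the invertibility of $A$, and the Cauchy--Schwarz splitting $\nu\phi f=(\nu\phi M^{1/2})(fM^{-1/2})$. Your remark that the Stokes-only case requires only moments up to $k=2$ for $\bar\phi$ is a correct and welcome refinement consistent with the paper's second integrability lemma.
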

\begin{proof} The proof can be easily seen by first observing that
\beq\label{est.K.1}
\|\nu{\cal K} f\|_{L_v^2(M^{-1})} = \int_{\mathbb{R}^d} \nu^2 M (\phi\cdot U_\nu)^2 dv \leq C|U_\nu|^2\,,
\eeq
where we have used the boundedness of $M$ in \eqref{int.M.2}, which can now be employed again together with the Cauchy-Schwarz inequality to conclude
\beq\label{est.K.2}
|U_\nu|^2 = \left|A^{-1} \int_{\mathbb{R}^{d}} \nu \phi f dv\right|^2 \leq C \int_{\mathbb{R}^{d}} \frac{f^2}{M} dv \int_{\mathbb{R}^{d}} \nu^2|\phi|^2  M \, dv\leq C \|f\|^2_{L^2_v(M^{-1})}\,.\eeq
\end{proof}
This continuity property of the linear collision operator allows to deduce well-posedness of the Cauchy-problem \eqref{kin.equ} with initial data $f^{in} \in L_{x,v}^2(M^{-1})$.
The mild formulation reads
\beqs
f(t,x,v)=f^{in}(x-vt,v)e^{-\nu t}+\int_0^t e^{-\nu (t-s)}\nu{\cal K}f(s,x-(t-s)v,v)ds\,.
\eeqs
If the assumptions guaranteeing continuity of ${\cal K}$ as in Lemma \ref{lem.cont.L} hold, then a standard contraction argument yields local well-posedness, which can be extended to a global result using the  a priori estimate \eqref{bound.f} below for $\e=1$.
Clearly also the Cauchy problem for the rescaled kinetic equation is well posed for any $\e>0$:
\begin{cor}
Let Assumption \ref{ass.par} or Assumption \ref{ass.par.2} hold and let $f^{in}\in L_{x,v}^2(M^{-1})$. Then there exists a unique solution $f^\e\in L_t^{\infty}(L_{x,v}^2(M^{-1}))$ to \eqref{resc.CP}.
\end{cor}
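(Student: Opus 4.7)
The plan is to deduce the rescaled well-posedness from the unrescaled well-posedness (just established via the mild formulation and Lemma~\ref{lem.cont.L}) by a simple diffeomorphic change of variables. First I would introduce the new variables $\tau = \e^{-\gamma} t$ and $y = \e^{1-\gamma} x$ and set $g(\tau, y, v) := f^\e(\e^\gamma \tau, \e^{\gamma-1} y, v)$. A direct computation shows that \eqref{resc.CP} transforms into the unrescaled equation
\[
\pa_\tau g + v \cdot \nabla_y g = {\cal L} g, \qquad g(0,y,v) = f^{in}(\e^{\gamma-1} y, v),
\]
so $g$ solves the Cauchy problem for \eqref{kin.equ} with rescaled initial data.

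Next I would verify that the rescaled initial datum lies in the correct function space: the change of variables in the spatial integral gives
\[
\|g(0,\cdot,\cdot)\|_{L^2_{y,v}(M^{-1})} = \e^{-d(\gamma-1)/2}\, \|f^{in}\|_{L^2_{x,v}(M^{-1})},
\]
which is finite for every $\e > 0$. The global well-posedness of the unrescaled Cauchy problem stated just above the corollary (contraction on the mild formulation using the continuity bound in Lemma~\ref{lem.cont.L}, extended globally by the a priori estimate \eqref{bound.f} at $\e = 1$) therefore applies and produces a unique $g \in L^\infty_\tau(L^2_{y,v}(M^{-1}))$.

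Finally I would undo the change of variables. Since the Jacobian factors are constant in $(\tau, y)$, the map $g \mapsto f^\e$ is a bijection between $L^\infty_\tau(L^2_{y,v}(M^{-1}))$ and $L^\infty_t(L^2_{x,v}(M^{-1}))$, which yields both existence and uniqueness of $f^\e$ in the target space. The conclusion holds verbatim under either Assumption~\ref{ass.par} or Assumption~\ref{ass.par.2}, since both ensure the hypotheses of Lemma~\ref{lem.cont.L}.

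I expect no substantive obstacle: the corollary is essentially a restatement of the unrescaled well-posedness through a trivial rescaling. The only point to keep in mind is that the $\e$-dependent constants worsen as $\e \to 0$, but this is irrelevant for fixed-$\e$ well-posedness (the uniform-in-$\e$ bounds needed for the hydrodynamic limit are a separate issue treated later). As an alternative, one could repeat the contraction argument directly on the mild formulation of \eqref{resc.CP} along the characteristics $x(s) = x - \e^{1-\gamma}(t-s)v$, with the same outcome.
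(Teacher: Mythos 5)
Your overall strategy (reduce to the unrescaled problem by a change of variables, or alternatively rerun the contraction argument on the rescaled mild formulation) is sound, but the change of variables you wrote down is wrong. With $g(\tau,y,v)=f^{\e}(\e^{\gamma}\tau,\e^{\gamma-1}y,v)$ the chain rule gives
\begin{equation*}
\pa_{\tau}g+v\cdot\nabla_{y}g=\e^{\gamma}\pa_{t}f^{\e}+\e^{\gamma-1}\,v\cdot\nabla_{x}f^{\e},
\end{equation*}
and since the equation \eqref{resc.CP} carries the coefficient $\e$ (not $\e^{\gamma-1}$) in front of the transport term, the right-hand side does not collapse to ${\cal L}f^{\e}$ unless $\gamma=2$ --- which is excluded, as $\gamma\in(1,2)$. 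Carrying your substitution through, $g$ actually satisfies $\pa_{\tau}g+\e^{2-\gamma}v\cdot\nabla_{y}g={\cal L}g$, so the claimed reduction to \eqref{kin.equ} fails as stated. The repair is immediate: take $t=\e^{\gamma}\tau$ and $x=\e y$, i.e.\ $g(\tau,y,v)=f^{\e}(\e^{\gamma}\tau,\e y,v)$, which matches both coefficients simultaneously and yields $\pa_{\tau}g+v\cdot\nabla_{y}g={\cal L}g$ with $\|g(0,\cdot,\cdot)\|_{L^{2}_{y,v}(M^{-1})}=\e^{-d/2}\|f^{in}\|_{L^{2}_{x,v}(M^{-1})}<\infty$ for each fixed $\e$; the rest of your argument then goes through verbatim.

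For comparison, the paper does not rescale at all: it simply observes that the same mild formulation and contraction argument used for \eqref{kin.equ}, combined with the a priori estimate \eqref{bound.f}, applies directly to \eqref{resc.CP} for each fixed $\e>0$ --- this is precisely the alternative you sketch in your last sentence, with characteristics $x-\e^{1-\gamma}(t-s)v$ after dividing the equation by $\e^{\gamma}$. Either route is acceptable; your closing remark that the $\e$-dependence of the constants is irrelevant for fixed-$\e$ well-posedness, and that uniform-in-$\e$ bounds are handled separately in Proposition \ref{prop.ex}, is correct.
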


Since we want to determine the convergence of $f^\e$ as $\e\rightarrow 0$ we shall now investigate the a priori estimates for the rescaled problem.
The basic $L^2$-estimate for kinetic transport equations is obtained by integrating the equation against $f^\e/M$. Similar to the formal derivation of the Fourier-Stokes limit in the introduction we shall introduce the micro-macro decompositions
\beq
\label{dec.g} f^{\e} &=& M\,\phi\cdot U^\e+g^\e\,,\\
\label{dec.g.nu} f^{\e} &=&  M\,\phi\cdot U_\nu^\e+g_\nu^\e \, = \, {\cal K} f^\e + g_\nu^\e\,,
\eeq
whose remainder terms fulfill
\beq
\int_{\mathbb{R}^d} \phi g^\e dv =0\,,\qquad \int_{\mathbb{R}^d} \nu \phi g^\e_\nu dv =0\,,
\eeq
due to the definition of the macroscopic moments and the conservation properties respectively. In a similar fashion to \cite{MMM} and \cite{BMP} we obtain the following lemma:
\begin{lem}\label{lem.L} Let Assumption \ref{ass.par} or Assumption \ref{ass.par.2} hold. Then the operator $\frac{1}{\nu}{\cal L}$ is bounded in $L^2_v(\nu M^{-1})$ and satisfies
\beq\label{est.L}
\int_{\mathbb{R}^d} {\cal L}f\frac{f}{M} dv = -\int_{\mathbb{R}^d}\frac{\nu}{M}|f-{\cal K} f|^2 dv
\eeq
for a positive constant $C$ and for all $f\in L^2_v(\nu M^{-1})$.
\end{lem}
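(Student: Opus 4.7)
The plan is to address the two claims in sequence; both reduce to direct computations that exploit the structure $\mathcal{L}f = \nu(\mathcal{K}f - f)$ together with the orthogonality relation built into definition \eqref{U.nu} of $U_{\nu,f}$.

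For the dissipation identity, I would begin with
\[
\int_{\mathbb{R}^d}\mathcal{L}f\,\frac{f}{M}\,dv = -\int_{\mathbb{R}^d}\frac{\nu}{M}(f-\mathcal{K}f)\,f\,dv,
\]
and then write $f = (f-\mathcal{K}f) + \mathcal{K}f$ in the second factor, so that a cross term $\int\frac{\nu}{M}(f-\mathcal{K}f)\,\mathcal{K}f\,dv$ appears. Using $\mathcal{K}f/M = \phi\cdot U_{\nu,f}$, this cross term equals $U_{\nu,f}\cdot\int\nu\phi(f-\mathcal{K}f)\,dv$, and the integral vanishes because, by construction, both $\int\nu\phi f\,dv$ and $\int\nu\phi\,\mathcal{K}f\,dv = \int\nu\phi\otimes\phi\,M\,dv\,U_{\nu,f} = A\,U_{\nu,f}$ reduce to $A\,U_{\nu,f}$. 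What remains is precisely $-\int\frac{\nu}{M}|f-\mathcal{K}f|^2\,dv$.

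For the boundedness of $\nu^{-1}\mathcal{L}$ on $L^2_v(\nu M^{-1})$, observe that $\nu^{-1}\mathcal{L}f = \mathcal{K}f - f$, so by the triangle inequality it suffices to bound $\|\mathcal{K}f\|_{L^2_v(\nu M^{-1})}$ by $\|f\|_{L^2_v(\nu M^{-1})}$. Writing $\mathcal{K}f = M\phi\cdot U_{\nu,f}$ yields $\|\mathcal{K}f\|_{L^2_v(\nu M^{-1})}^2 \le C\,|U_{\nu,f}|^2$ with $C = \int\nu|\phi|^2 M\,dv$, which is finite by \eqref{int.M.2} since $|\phi|^2$ grows at most like $1+|v|^2+|v|^4$. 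To close the estimate in the right norm, I would apply Cauchy--Schwarz to \eqref{U.nu} with the weight $\nu$ distributed symmetrically,
\[
|U_{\nu,f}|^2 = \left|A^{-1}\int_{\mathbb{R}^d}\bigl(\sqrt{\nu}\,\phi\sqrt{M}\bigr)\cdot\Bigl(\sqrt{\nu}\,\frac{f}{\sqrt{M}}\Bigr)\,dv\right|^2 \le C\,\|f\|_{L^2_v(\nu M^{-1})}^2,
\]
producing exactly the weighted norm on the right.

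The only genuine subtlety is this Cauchy--Schwarz split: the estimate \eqref{est.K.2} used for Lemma \ref{lem.cont.L} placed all of the $\nu$ weight on the $\phi$ factor (yielding the unweighted $M^{-1}$ norm on the right), whereas here a single $\sqrt{\nu}$ must land on $f$ so that the target norm $L^2_v(\nu M^{-1})$ appears. Once the split is arranged, no integrability beyond what Assumptions \ref{ass.par}--\ref{ass.par.2} already provide is needed, and both statements follow with explicit constants depending only on $|A^{-1}|$ and $\int \nu|\phi|^2 M\,dv$.
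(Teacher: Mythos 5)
Your proposal is correct and follows essentially the same route as the paper: the dissipation identity comes from the vanishing of the cross term $\int \nu\phi(f-\mathcal{K}f)\,dv=0$ (which is exactly the paper's use of the conservation property \eqref{cons.L} to write $\int \mathcal{L}f\,\mathcal{K}f/M\,dv=0$), and the boundedness of $\mathcal{K}$ is obtained via the same symmetric Cauchy--Schwarz split of the weight $\nu$, which is precisely the paper's ``slight modification of \eqref{est.K.2}''. No gaps.
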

\begin{proof} To prove the boundedness of $\frac{1}{\nu} {\cal L}$ it remains to check the boundedness of ${\cal K}$. In a similar fashion to \eqref{est.K.1} one can show that $\|{\cal K}f\|_{L^2_v(\nu M^{-1})}\leq C|U_\nu|^2$, and we conclude the boundedness with a slight modification of \eqref{est.K.2}:
\beqs
|U_\nu|^2=\left|A^{-1}\int_{\mathbb{R}^d}\nu \phi f dv\right|^2\leq C \int_{\mathbb{R}^d} \frac{\nu}{M}f^2 dv \int_{\mathbb{R}^d}\nu |\phi|^2 M dv \leq C\|f\|^2_{L_v(\nu M^{-1})}\,.
\eeqs
To show  \eqref{est.L} we first observe that due to the conservation properties of ${\cal L}$ \eqref{cons.L} we have
\beqs
\int_{\mathbb{R}^d}{\cal L} f \frac{{\cal K} f}{M} dv = 
\int_{\mathbb{R}^d}\phi \,{\cal L} f  \, dv \cdot U_\nu =0\,.  
\eeqs
Using this we can rewrite
\beqs
\int_{\mathbb{R}^d}{\cal L} f \frac{f}{M} dv = \int_{\mathbb{R}^d}{\cal L} f \frac{f-{\cal K}f}{M} dv = - \int_{\mathbb{R}^d}\frac{\nu}{M}|f-{\cal K} f|^2 dv \,.
\eeqs
\end{proof}
This lemma now yields the basic ingredient for deriving the following a priori estimates: 
\begin{prop}\label{prop.ex} Let Assumption \ref{ass.par} be satisfied. Then the solution $f^\e$  of \eqref{resc.CP} is bounded in $L_t^\infty( L^2_{x,v}(M^{-1}))$ uniformly with respect to $\e$. Moreover it satisfies the decomposition \eqref{dec.g.nu}, where $U_\nu^\e$ and $g_\nu^\e$ are bounded by the initial data $f^{in}$ in the sense that
\beq
\label{bound.f} \sup_{t>0} \|f^\e\|_{L^2_{x,v}(M^{-1})} &\leq& \|f^{in}\|_{L^2_{x,v}(M^{-1})}\,,\\
\label{bound.g.nu}\|g_\nu^\e\|_{L^2_{t,x,v}(\nu M^{-1})}&\leq&  \e^{\gamma/2}\|f^{in}\|_{L^2_{x,v}(M^{-1})}\,,\\
\label{bound.U.nu} \sup_{t>0} \|U_\nu^\e(t,.)\|_{L^2_{x}}& \leq&  C\|f^{in}\|_{L^2_{x,v}(M^{-1})}\,.
\eeq
\end{prop}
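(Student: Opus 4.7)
The plan is the classical $L^2_{x,v}(M^{-1})$ energy estimate that exploits the coercivity identity of Lemma \ref{lem.L}. First I would multiply the rescaled equation \eqref{resc.CP} by $f^\e/M$ and integrate over $\R^{2d}$. Since $M$ depends only on $|v|$, the free transport contribution is a perfect spatial divergence,
\[\int_{\R^{2d}} v\cdot \nabla_x f^\e\,\frac{f^\e}{M}\,dv\,dx \;=\; \int_{\R^{2d}} \nabla_x\cdot\Big(v\,\tfrac{(f^\e)^2}{2M}\Big)\,dv\,dx \;=\; 0,\]
while Lemma \ref{lem.L}, combined with the decomposition \eqref{dec.g.nu} identifying $g_\nu^\e=f^\e-{\cal K}f^\e$, turns the collision contribution into $-\|g_\nu^\e(t)\|^2_{L^2_{x,v}(\nu M^{-1})}$. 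This yields the differential identity
\[\frac{\e^\gamma}{2}\,\frac{d}{dt}\|f^\e(t)\|^2_{L^2_{x,v}(M^{-1})} \;=\; -\,\|g_\nu^\e(t)\|^2_{L^2_{x,v}(\nu M^{-1})}.\]

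Integrating from $0$ to any $T\leq\infty$ produces the single inequality
\[\|f^\e(T)\|^2_{L^2_{x,v}(M^{-1})} \;+\; \frac{2}{\e^\gamma}\int_0^T \|g_\nu^\e(s)\|^2_{L^2_{x,v}(\nu M^{-1})}\,ds \;\leq\; \|f^{in}\|^2_{L^2_{x,v}(M^{-1})}.\]
Discarding the dissipation term and taking the supremum in $T$ gives \eqref{bound.f}; discarding the first term and sending $T\to\infty$ gives \eqref{bound.g.nu}, the constant $1/\sqrt{2}$ being harmlessly absorbed. For \eqref{bound.U.nu}, I would invert $A$ in \eqref{U.nu} and apply Cauchy--Schwarz pointwise in $(t,x)$:
\[|U_\nu^\e(t,x)|^2 \;\leq\; |A^{-1}|^2\,\Big(\int_{\R^d}\tfrac{(f^\e)^2}{M}\,dv\Big)\Big(\int_{\R^d}\nu^2|\phi|^2 M\,dv\Big) \;\leq\; C\,\|f^\e(t,x,\cdot)\|^2_{L^2_v(M^{-1})},\]
the velocity integral being finite by \eqref{int.M.2}, since $|\phi|^2$ is a polynomial of degree at most four. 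Integrating in $x$ and using \eqref{bound.f} concludes.

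The only genuinely delicate point is the rigorous vanishing of the transport term for mere $L^2_{x,v}(M^{-1})$ data: formally it is a boundary contribution at spatial infinity, and one must be sure that no mass is lost there. For the mild solutions constructed in the preceding corollary this is standard: approximate $f^{in}$ in $L^2_{x,v}(M^{-1})$ by compactly supported smooth data, for which the divergence identity is immediate, and pass to the limit using the continuity of the free-transport semigroup together with the contraction argument already underlying the Cauchy theory. Everything else is a routine consequence of Lemma \ref{lem.L} and the moment bounds of Section 2.1.
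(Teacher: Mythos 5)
Your proof is correct and follows exactly the paper's argument: the Gronwall-free energy identity obtained by testing with $f^\e/M$, the use of Lemma \ref{lem.L} to identify the dissipation as $\|g_\nu^\e\|^2_{L^2_{x,v}(\nu M^{-1})}$, time integration for \eqref{bound.f}--\eqref{bound.g.nu}, and the Cauchy--Schwarz estimate \eqref{est.K.2} integrated in $x$ for \eqref{bound.U.nu}. Your closing remark on justifying the vanishing of the transport term for mild solutions is a reasonable additional precaution that the paper leaves implicit.
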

\begin{proof} Using \eqref{est.L}, the basic $L^2$-estimate for the solution is obtained as follows
\beqs
\frac{\e^\gamma}{2}\frac{d}{dt}\|f^\e\|^2_{L^2_{x,v}(M^{-1})}= \int_{\mathbb{R}^{2d}}{\cal L}f^\e\, \frac{f^\e}{M} dv dx = -\int_{\mathbb{R}^{2d}} \frac{\nu}{M} |f^\e-{\cal K}f^{\e}|^2 dv dx = -\int_{\mathbb{R}^{2d}} \frac{\nu}{M}(g_\nu^\e )^2 dv dx\,.
\eeqs
Integration in time implies \eqref{bound.f} and \eqref{bound.g.nu}.
For the boundedness of the macroscopic moments $U_\nu^\e$ in  \eqref{bound.U.nu}  it only remains to 
integrate \eqref{est.K.2} over $x$ and taking the supremum in time.
\end{proof}
\begin{lem}\label{lem.weak.conv} Let the assumptions of Proposition \ref{prop.ex} hold. Then
there exists a $U\in L^\infty_t(L^2_x)$, such that $f^\e \rightharpoonup^* M \phi \cdot U $ in $L^\infty((0,T);L^2_{x,v}(\nu M^{-1}))$ for any $T>0$. In particular we have the convergence of the macroscopic moments $U_\nu^\e, \, U^\e \rightharpoonup^* U$ in $L^\infty((0,T);L^2_{x})$. In the case of heavy tailed equilibrium distributions $\tilde M$ moreover strong convergence of $U_\nu^\e-U^\e\rightarrow 0$ in $L^\infty((0,T);L^2_{x})$ holds. Under Assumption \ref{ass.par.2} the same statements are valid for $\bar U^\e_\nu$ and $\bar U^\e$ respectively.
\end{lem}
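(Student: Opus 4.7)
The plan is standard weak-$*$ compactness followed by limit identification, using the strong convergence $g_\nu^\e \to 0$ in $L^2_{t,x,v}(\nu M^{-1})$ of Proposition~\ref{prop.ex}. That proposition already gives uniform bounds on $f^\e$ in $L^\infty_t L^2_{x,v}(M^{-1})$ and on $U_\nu^\e$ in $L^\infty_t L^2_x$, and a Cauchy--Schwarz estimate
\[|U^\e(t,x)|^2 \leq \left(\int_{\mathbb{R}^d} |\zeta|^2 M\,dv\right)\int_{\mathbb{R}^d}\frac{(f^\e)^2}{M}\,dv \leq C\,\|f^\e(t,\cdot,\cdot)\|^2_{L^2_v(M^{-1})},\]
with the first factor finite by \eqref{mom.M}, bounds $U^\e$ in $L^\infty_t L^2_x$ as well. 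Banach--Alaoglu then yields subsequential weak-$*$ limits $f^\e \rightharpoonup^* f$, $U_\nu^\e \rightharpoonup^* U$ and $U^\e \rightharpoonup^* U'$ in the respective spaces.

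To identify $f = M\phi\cdot U$, I would also extract a weak-$*$ limit of ${\cal K}f^\e = M\phi\cdot U_\nu^\e$ in $L^\infty_t L^2_{x,v}(M^{-1})$ -- uniformly bounded via $\|M\phi\cdot U_\nu^\e\|^2_{L^2_v(M^{-1})} \leq |U_\nu^\e|^2\int M|\phi|^2\,dv$ -- which must coincide with $M\phi\cdot U$ by the weak-$*$ convergence of $U_\nu^\e$. The strong convergence $f^\e - {\cal K}f^\e = g_\nu^\e \to 0$ in $L^2_{t,x,v}(\nu M^{-1})$ then forces $f = M\phi\cdot U$ almost everywhere on $\{\nu > 0\}$, a set of full measure in both regimes. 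The equality $U = U'$ follows from testing $f^\e$ against $\zeta\,\psi(t,x)$ with $\psi \in L^1_t L^2_x$, which is admissible in the $L^2(M^{-1})$ topology since $\int |\zeta|^2 M\,dv < \infty$, together with the normalisation $\int \zeta (M\phi)^T\,dv = I$ that is a direct computation from \eqref{mom.M}.

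For the strong convergence of $U^\e - U_\nu^\e$ in the heavy-tailed regime, taking the $\zeta$-moment of \eqref{dec.g.nu} and using the same normalisation gives the pointwise identity $U^\e - U_\nu^\e = \int \zeta\,g_\nu^\e\,dv$, whence Cauchy--Schwarz with the dual weight $\nu/M$ yields
\[|U^\e - U_\nu^\e|^2 \leq \left(\int_{\mathbb{R}^d}|\zeta|^2\frac{M}{\nu}\,dv\right)\int_{\mathbb{R}^d}(g_\nu^\e)^2\frac{\nu}{M}\,dv.\]
The first factor is finite under Assumption~\ref{ass.par} because $|\zeta|^2$ grows like $|v|^4$ and the condition $\alpha + \tilde\beta > 5$ secures convergence of the far-field integral; integrating in $x,t$ and using \eqref{bound.g.nu} produces an $O(\e^{\gamma/2})$ rate in $L^2_{t,x}$. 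Under Assumption~\ref{ass.par.2}, the analogous argument with $\bar\phi$ and $\bar A$ gives $\bar U^\e - \bar U_\nu^\e = \int \bar\phi\,g_\nu^\e\,dv$ and only requires $\alpha + \tilde\beta > 3$, which is available by \eqref{tail.mom.2}.

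I expect the main obstacle to be the identification of the $U^\e$-limit in the Gaussian regime, where $\int |\zeta|^2 M^*/\nu^*\,dv = +\infty$ (the degeneracy $\nu^* = |v|^{\beta^*}$ spoils integrability near $v = 0$ once $\beta^* > d$, which is forced by $\beta^* > d+2$). In that regime $\zeta$ is not an admissible test function inside the space $L^2(\nu M^{-1})$ in which $g_\nu^\e$ vanishes strongly -- equivalently, $U^\e - U_\nu^\e$ has no strong limit in that regime, which matches the statement of the lemma. The workaround is to retain the stronger $L^\infty_t L^2_{x,v}(M^{-1})$ bound of Proposition~\ref{prop.ex} throughout, since $\zeta$ is an admissible test function in that larger dual space.
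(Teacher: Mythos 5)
Your proposal is correct and follows essentially the same route as the paper: weak-$*$ compactness from the a priori bounds of Proposition \ref{prop.ex}, identification of the limit via the strong convergence $g_\nu^\e\to 0$ in $L^2_{t,x,v}(\nu M^{-1})$, and for the heavy-tailed case the moment identity $U^\e-U_\nu^\e=\int \zeta\, g_\nu^\e\,dv$ combined with Cauchy--Schwarz against the weight $\nu/M$, which hinges on $\int |\phi|^2 M/\nu\,dv<\infty$. Your closing observation that this integrability fails in the Gaussian regime (because $\beta^*>d$ makes $M^*/\nu^*$ non-integrable near $v=0$) correctly pinpoints why the strong convergence is asserted only for heavy tails, exactly as in the paper's use of \eqref{int.M.1} with $\delta=0$ versus $\delta=1$.
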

\begin{proof} To see the weak$^*$-convergence we first observe that the uniform bound of $U_\nu^\e$ in $L_t^\infty(L^2_{x})$ given in \eqref{bound.U.nu} implies the existence of a  $U\in L_t^\infty(L^2_x)$ such that $U_\nu^\e \rightharpoonup^* U $ in $L_t^\infty(L^2_{x})$. Moreover the bound \eqref{bound.g.nu} implies that $f^\e-M \phi\cdot U_\nu^\e \rightarrow 0$ in $L^2_{t,x,v}(\nu M^{-1})$, which allows to deduce  $f^\e \rightharpoonup^* M \phi \cdot U $ in $L^\infty((0,T);L^2_{x,v}(\nu M^{-1}))$ for any $T>0$,  implying also for the macroscopic moment $U^\e \rightharpoonup^* U$ in $L^\infty((0,T);L^2_{x})$. 

To show the strong convergence of $U^\e_\nu-U^\e$ in the case of heavy-tailed equilibria we first note that  integrating the difference of the decompositions \eqref{dec.g}-\eqref{dec.g.nu} against $\phi$ gives
\beqs
A (U^\e_\nu-U^\e) = \int_{\mathbb{R}^{d}} \phi (g^\e - g^\e_\nu ) dv = -\int_{\mathbb{R}^{d}} \phi g^\e_\nu dv\,.
\eeqs
In the case of $M(v)=\tilde M(v)$  the integrability of $M$ in  \eqref{int.M.1} holds for $\delta =0$ and we can thus employ the Cauchy-Schwarz inequality as follows
\[\|U_\nu^\e(t,.)-U^\e(t,.)\|^2_{L^2({\mathbb{R}^{d}})}\leq C \int_{\mathbb{R}^{d}} \left(\int_{\mathbb{R}^{d}} \phi g^\e_{\nu} dv \right)^2 dx\leq 
C\int_{\mathbb{R}^{d}} \int_{\mathbb{R}^{d}} \frac{\nu (g^\e_{\nu})^2}{M} dv dx \int_{\mathbb{R}^{d}} \frac{|\phi|^2M}{\nu} dv \leq C\e^\gamma \,,\]
where for the last inequality we applied \eqref{bound.g.nu}.
\end{proof}


\section{Weak formulation and auxiliary equation}

\subsection{An auxiliary equation}
Analogously to Mellet \cite{M} and Ben-Abdallah et al. \cite{BMP} we introduce an auxiliary function $\chi^\e(t,v,x)$ defined as the solution of 
\beq\label{aux.equ}
\nu(v) \chi^\e -\e v\cdot \nabla_x \chi^\e = \nu (v) \varphi (t,x)\,,
\eeq
where $\varphi(t,x)$ is a test function in ${\cal D} ([0,\infty)\times\mathbb{R}^d)$ and hence $\chi^\e \in L_{t,v}^\infty((0,\infty)\times \mathbb{R}^d;L_x^2(\mathbb{R}^d))$. It is easy to verify that
\beqs
\chi^\e=\int_0^\infty e^{-\nu(v) z} \nu(v) \varphi(t,x+\e v z)dz\,.
\eeqs
Considering
\beq
\label{aux.rel.2}
\chi^{\e}-\varphi = \int_0^\infty \nu e^{-\nu z}(\varphi(t,x+\e vz)-\varphi(t,x))dz \,,
\eeq
it can easily be deduced that  $|\chi^\e -\varphi|\leq \|D\varphi\|_{\infty} \e|v|$, which implies uniform convergence in space and time, but not with respect to $v$. The proof of Lemma 2.5 in \cite{BMP} can easily be extended to give the following convergence results:
\beq
\label{conv.chi}\phi\, \chi^\e & \rightarrow & \phi\, \varphi \qquad \textnormal{strongly in }\ L^\infty_t(L^2_{x,v}(M))\,,\\
\label{conv.chi.t}\phi\, \pa_t\chi^\e & \rightarrow & \phi\,\pa_t\varphi \qquad \textnormal{strongly in }\ L_t^\infty(L^2_{x,v}(M))\,,
\eeq
where the extension from $\phi\equiv 1$ in \cite{BMP} to $\phi$ given as in \eqref{phi} is straightforward due to the weight $M$. The proof relies on a estimate of the form 
\beqs
\|\phi(\chi^\e-\varphi)\|^2_{L_{x,v}^2(M)} &=& \int_{\mathbb{R}^{2d}}M \left|\int_0^\infty e^{-\nu z}\nu \phi (\varphi(x+\e vz)-\varphi(x))dz\right|^2dxdv\\
&\leq& \int_{\mathbb{R}^d}\int_0^\infty M e^{-\nu z} \nu |\phi|^2 \|\varphi(\cdot+\e v z)-\varphi\|^2_{L^2_x} dz dv
\eeqs
The fact that $\|\varphi(\cdot+\e v z)-\varphi\|_{L^2_x}\rightarrow 0$ as $\e\rightarrow 0$ for all $v$ and $z$, together with the integrability condition \eqref{int.M.2}, allow to apply the Lebesgue dominated convergence theorem. A similar proof holds for the time derivative.
\subsection{The weak formulation}

Since the macroscopic equation for $\r^\e$ is closed in terms of the macroscopic moments $U^\e$ (see \eqref{equ.rho.eps}), it is sufficient to consider test functions $\varphi(t,x) \in {\cal D} ([0,\infty)\times\mathbb{R}^d)$ independent of $v$. Note that this corresponds to building the inner product in $L^2_{t,x,v}(M^{-1})$ of the kinetic equation with $\varphi(t,x) M(v)$. 
\beq
&&-\int_0^\infty \int_{\mathbb{R}^{d}} \r^{\e} \pa_t \varphi  dx dt - \int_{\mathbb{R}^{d}} \r^{in}\varphi(t=0) dx
 \label{weak.form.macro}=\e^{1-\gamma}  \int_0^\infty \int_{\mathbb{R}^{d}}\nabla_x \varphi \cdot m^{\e}  dx dt
\eeq
This equation will in the limit provide the incompressibility condition.

In order to derive equations for the macroscopic momentum and temperature we consider  the weak formulation of the rescaled kinetic equation \eqref{resc.CP} using testfunctions as introduced in the previous subsection. As for the classical Stokes-Fourier equations we shall consider the following moments corresponding to
\[
\psi(v)= \left(\begin{array}{c}v\\\frac{|v|^2-(d+2)}{2}\end{array}\right)\,.
\]
We shall for each moment $\psi_i$ consider a separated testfunction $\phi_i\in \cal D([0,\infty)\times \mathbb{R}^d)$ with its corresponding auxiliary function $\chi_i^\e$. Integrating the kinetic equation against $\psi_i \chi_i^\e$ gives
\beqs
&&-\int_0^\infty \int_{\mathbb{R}^{2d}} \psi_i f^{\e} \pa_t \chi_i^{\e} dv dx dt - \int_{\mathbb{R}^{2d}}\psi_i f^{in}\chi_i^{\e}(t=0) dv dx \\
&&\quad = \e^{-\gamma} \int_0^\infty \int_{\mathbb{R}^{2d}}\psi_i {\cal L} f^{\e}\,  \chi_i^{\e}  dv dx dt + \e^{1-\gamma} \int_0^\infty\int_{\mathbb{R}^{2d}} \psi_i v\, f^{\e}\cdot  \nabla_x\chi_i^{\e}  dv dx dt
\\
&&\quad = \e^{-\gamma}  \int_0^\infty \int_{\mathbb{R}^{2d}} \psi_i M \, \phi \cdot U^{\e}_\nu\,  \chi_i^{\e}  dv dx dt + \e^{-\gamma} \int_0^\infty \int_{\mathbb{R}^{2d}} \psi_i f^{\e} (-\nu \chi_i^{\e} + \e v\cdot \nabla_x \chi_i^{\e}) dv dx dt\\
&&\quad =\e^{-\gamma}  \int_0^\infty \int_{\mathbb{R}^{2d}} \nu\psi_i M \, \phi \cdot U^{\e}_\nu\,   \chi_i^{\e}  dv dx dt - \e^{-\gamma} \int_0^\infty \int_{\mathbb{R}^{2d}} \nu \psi_i  f^{\e} \,  dv \,\varphi_i\, dx dt\,,
\eeqs
where we have used the auxiliary equation \eqref{aux.equ}. Taking into account the conservation property of the collision operator \eqref{cons.L} in the latter integral we finally obtain the weak formulation
\beq
&&-\int_0^\infty \int_{\mathbb{R}^{2d}} \psi_i f^{\e} \pa_t \chi_i^{\e} dv dx dt - \int_{\mathbb{R}^{2d}}\psi_i f^{in}\chi^{\e}(t=0) dx dv \nonumber\\
&&\quad \label{weak.form}=\e^{-\gamma}  \int_0^\infty \int_{\mathbb{R}^{2d}}\psi_i M \, \phi \cdot U^{\e}_\nu\,   \nu(\chi_i^{\e}-\varphi_i)  dv dx dt \,.
\eeq
In the following we will analyse the convergence properties of this weak form, in particular the right hand side. In the next subsection we will analyse the limiting behaviours of the separate terms. These Lemmas will then be used in Section \ref{sec.proof} to conclude the proofs of the Theorems \ref{thm.1} and \ref{thm.2}.

\subsection{Convergence properties}
We first derive the convergence results required for  the macroscopic limit to the fractional Stokes-Fourier system. At the end of the subsection we will derive the corresponding convergence properties for the  fractional Stokes limit for conservation of density and momentum only.

In the following we will several times have to bound integrals of the form
\[
I(t,x) = \int_{\mathbb{R}^d} f(v) g(t,x + \tau v ) dv 
\]
in $L^2_{t,x}$ for some $\tau \in \mathbb{R}$. This can be done by first applying the Cauchy-Schwarz inequality and then interchanging the order of integration:
\beq
\|I\|^2_{L^2_{t,x}} &=& \int_0^\infty \int_{\mathbb{R}^{d}}\left(\int_{\mathbb{R}^d} f(v) g(t,x+\tau v) dv \right)^2 dx dt \nonumber\\
&\leq& \int_0^\infty \int_{\mathbb{R}^d} |f(v)| dv \int_{\mathbb{R}^d} |f(v)| g^2(t,x+\tau v) dv dx dt\nonumber\\
\label{est.L2}&=& \int_0^\infty \int_{\mathbb{R}^d} g^2(t,x) dx dt \left(\int_{\mathbb{R}^d} |f(v)| dv \right)^2= \|g\|_{L^2_{t,x}} \|f\|^2_{L_v^1}\,.
\eeq
We shall first consider the terms arising from the time derivative on the left hand side of the weak formulation in \eqref{weak.form}:

\begin{lem}\label{conv.time} Let Assumption \ref{ass.par} hold and let $\chi_i^\e$  be auxiliary functions satisfying \eqref{aux.equ} for $\varphi_i \in \cal D([0,\infty)\times \mathbb{R}^d)$ $(i\in\{1,\dots,d\})$. Let moreover $f_\e$ be the weak solution as in Proposition \ref{prop.ex}. Then, as $\e \rightarrow 0$, the weak form of the time derivatives in \eqref{weak.form} converges in the sense that
\beq
\int_0^\infty \int_{\mathbb{R}^{2d}}\psi_i f^{\e}\pa_t \chi_i^{\e} dv dx dt + \int_{\mathbb{R}^{2d}}\psi_i f^{in} \chi_i^{\e}(t=0) dv dx
\nonumber\\
\label{conv.time.2}\rightarrow \int_{\mathbb{R}^d} \psi_i \phi \, M\, dv \cdot \left(\int_0^\infty \int_{\mathbb{R}^{d}} U \pa_t \varphi_i dx dt + \int_{\mathbb{R}^{d}} U^{in} \varphi_i(t=0) dx\right)
\eeq
\end{lem}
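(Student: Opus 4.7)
The plan is to split each integrand using $\chi_i^\e = \varphi_i + (\chi_i^\e - \varphi_i)$ (and analogously for the time derivative), so that the principal piece passes to the limit by weak-$*$ convergence of $f^\e$ against an $\e$-independent test function, while the remainder vanishes by the strong convergences \eqref{conv.chi}--\eqref{conv.chi.t} combined with the a priori bound \eqref{bound.f}.

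For the time-derivative term I would first write
\[
\int_0^\infty \int_{\mathbb{R}^{2d}}\psi_i f^\e \pa_t\chi_i^\e\, dv\, dx\, dt = \int_0^\infty \int_{\mathbb{R}^{2d}}\psi_i f^\e \pa_t\varphi_i\, dv\, dx\, dt + R_1^\e,
\]
where $R_1^\e$ denotes the remainder. Since each component of $\psi$ is a linear combination of the components of $\phi$, the moment conditions \eqref{mom.M} give $\int \psi_i^2 M\, dv<\infty$; together with the compact support of $\varphi_i$ in time this places $\psi_i \pa_t\varphi_i$ in $L^1_t(L^2_{x,v}(M))$. Combined with the uniform bound \eqref{bound.f} and Lemma~\ref{lem.weak.conv}, the weak-$*$ convergence $f^\e \rightharpoonup^* M\phi\cdot U$ in $L^\infty_t(L^2_{x,v}(M^{-1}))$ (whose limit is identified through the stronger space in Lemma~\ref{lem.weak.conv}) yields
\[
\int_0^\infty \int_{\mathbb{R}^{2d}}\psi_i f^\e \pa_t\varphi_i\, dv\, dx\, dt \longrightarrow \int_{\mathbb{R}^d}\psi_i\,\phi\, M\, dv \cdot \int_0^\infty \int_{\mathbb{R}^d} U\, \pa_t\varphi_i\, dx\, dt.
\]
For the remainder, the strong convergence \eqref{conv.chi.t}, extended from $\phi$ to any linear combination of its components (hence to $\psi_i$), gives $\psi_i \pa_t\chi_i^\e \to \psi_i \pa_t\varphi_i$ in $L^\infty_t(L^2_{x,v}(M))$, and Cauchy--Schwarz in the conjugate weights $(M^{-1}, M)$ bounds
\[
|R_1^\e| \leq T\, \|f^\e\|_{L^\infty_t L^2_{x,v}(M^{-1})}\, \|\psi_i(\pa_t\chi_i^\e - \pa_t\varphi_i)\|_{L^\infty_t L^2_{x,v}(M)}\longrightarrow 0,
\]
where $T$ is the time support of $\varphi_i$.

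The initial-data term is treated analogously. Writing $\chi_i^\e(0) = \varphi_i(0) + (\chi_i^\e(0)-\varphi_i(0))$, the remainder is estimated exactly as above using \eqref{conv.chi} and $f^{in}\in L^2_{x,v}(M^{-1})$. For the principal piece I would use the key identity
\[
\int_{\mathbb{R}^d}\psi_i f\, dv = \int_{\mathbb{R}^d}\psi_i\,\phi\, M\, dv \cdot U_f,
\]
valid for any $f$ since $\psi_i$ lies in the span of the components of $\phi$; the identity is immediate from \eqref{mom.M} and converts the principal piece directly into $\int_{\mathbb{R}^d}\psi_i\,\phi\, M\, dv \cdot \int_{\mathbb{R}^d} U^{in}\varphi_i(0)\, dx$, which is the claimed limit.

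The main obstacle is simply the bookkeeping of weighted $L^2$ spaces, ensuring at each step that the relevant test function lies in the predual used for weak-$*$ convergence (or in the partner weight for Cauchy--Schwarz). Once $\psi_i$ is recognized as a linear combination of components of $\phi$ and the finiteness of the relevant moments of $M$ is verified via \eqref{mom.M}, the argument reduces to a clean application of Lemma~\ref{lem.weak.conv} and the strong convergences \eqref{conv.chi}--\eqref{conv.chi.t}; no further use of the fractional-diffusion assumptions in Assumption~\ref{ass.par} is needed for this particular lemma.
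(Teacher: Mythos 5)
Your argument is correct and is essentially the paper's proof, just written out in full: the paper likewise combines the strong convergence $\psi_i\,\pa_t\chi_i^\e\to\psi_i\,\pa_t\varphi_i$ in $L^\infty_t(L^2_{x,v}(M))$ from \eqref{conv.chi.t} with the weak-$*$ convergence $f^\e\rightharpoonup^* M\phi\cdot U$ in $L^\infty_t(L^2_{x,v}(M^{-1}))$ and the compact time support of $\varphi_i$. Your added details (the splitting $\chi_i^\e=\varphi_i+(\chi_i^\e-\varphi_i)$, the verification that $\psi_i$ lies in the span of the components of $\phi$ so that $\int\psi_i\phi M\,dv\cdot U_f=\int\psi_i f\,dv$, and the Cauchy--Schwarz estimate for the remainder) are all consistent with what the paper leaves implicit.
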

\begin{proof} Due to the strong convergence of $\psi \pa_t\chi^\e_i\rightarrow \psi \pa_t \varphi_i $ in $L^\infty((0,\infty);L^2_{x,v}(M))$ in \eqref{conv.chi.t} the weak convergence of $f^\e \rightharpoonup M \phi\cdot U$ in $L^\infty((0,T);L^2_{x,v}(M^{-1}))$ and the fact that $\varphi_i$ is a test function, the stated convergence can be deduced. 
\end{proof}

For passing to the limit in the right hand side of the weak formulation in \eqref{weak.form} we will make use of the following expansions of the auxiliary function obtained by integration by parts:
\beq\label{exp.chi}
\nu(v)(\chi^{\e}(t,x,v)-\varphi(t,x)) &=& \e v \cdot \nabla_x \varphi(t,x)  + \e^2 \int_0^\infty e^{-\nu z} v^T\cdot D_x^2\varphi(t,x+\e vz)\cdot v dz\\
\label{exp.chi.2}
\nu(v)(\chi^{\e}(t,x,v)-\varphi(t,x)) &=&  \e \int_0^\infty \nu e^{-\nu z} v \cdot \nabla_x\varphi(t,x+\e vz) dz
\eeq 

We start with deriving the behaviour of the right hand side of \eqref{weak.form} for $\psi_i=v_i$ $(i\in \{1,\dots,d\})$:

\begin{lem}\label{conv.mom.1} Let the assumptions of Lemma \ref{conv.time} hold, then
\beqs
&&\e^{-\gamma} \int_0^\infty\int_{\mathbb{R}^{2d}} v_i M \, \phi\cdot U_\nu^\e\,   \nu (\chi_i^\e-\varphi_i) dv dx dt  \\
&&\quad =\e^{1-\gamma}\int_0^\infty\int_{\mathbb{R}^d}\left(\r_\nu^\e +\theta_\nu^\e\right) \pa_{x_i} \varphi_i dx dt + R^\e\qquad i\in\{1,\dots,d\}\,,
\eeqs
 where $R^\e\rightarrow 0$ as $\e\rightarrow 0$.
\end{lem}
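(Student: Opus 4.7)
The plan is to substitute the expansion \eqref{exp.chi} of $\nu(\chi_i^\e-\varphi_i)$ into the integral and separate the leading term (of order $\e^{1-\gamma}$) from a remainder (of order $\e^{2-\gamma}$). Writing
\[
\e^{-\gamma}\!\!\int_0^\infty\!\!\int_{\mathbb{R}^{2d}}\! v_i M \phi\cdot U_\nu^\e\,\nu(\chi_i^\e-\varphi_i)\,dv dx dt
=\e^{1-\gamma} L^\e+\e^{2-\gamma} \mathcal{R}^\e,
\]
where
\[
L^\e=\int_0^\infty\!\!\int_{\mathbb{R}^{2d}} v_i M(\phi\cdot U_\nu^\e)(v\cdot\nabla_x\varphi_i)\,dv dx dt
\]
and $\mathcal{R}^\e$ carries the second-order integral-by-parts term.

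For $L^\e$, I would expand $\phi\cdot U_\nu^\e=\r_\nu^\e+v\cdot m_\nu^\e+\tfrac{|v|^2-d}{2}\theta_\nu^\e$ and write $v\cdot\nabla_x\varphi_i=\sum_j v_j\pa_{x_j}\varphi_i$. The $v$-integrals are then pure moments of $M$: by radial symmetry of $M$, the cross-term involving $v\cdot m_\nu^\e$ is odd in $v$ and vanishes; the $\r_\nu^\e$-term contributes $\int v_iv_jM\,dv=\delta_{ij}$ (using $\int|v|^2M\,dv=d$ in \eqref{mom.M}); and the $\theta_\nu^\e$-term contributes $\int v_iv_j\tfrac{|v|^2-d}{2}M\,dv=\delta_{ij}$ (using $\int v_i^2|v|^2M\,dv=\tfrac{1}{d}\int|v|^4M\,dv=d+2$ from \eqref{mom.M}). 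Summing, $L^\e=\int_0^\infty\!\int_{\mathbb{R}^d}(\r_\nu^\e+\theta_\nu^\e)\pa_{x_i}\varphi_i\,dx dt$, which produces the claimed leading term.

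For the remainder
\[
\mathcal{R}^\e=\int_0^\infty\!\!\int_{\mathbb{R}^{2d}}\! v_iM(\phi\cdot U_\nu^\e)(t,x)\!\int_0^\infty\! e^{-\nu z} v^T D_x^2\varphi_i(t,x+\e vz) v\,dz\,dv dx dt,
\]
I would first apply Cauchy--Schwarz in $x$ (after a translation of the integration variable that leaves the $L^2_x$-norm of $D_x^2\varphi_i$ invariant), then carry out the trivial $z$-integral $\int_0^\infty e^{-\nu z}dz=1/\nu$, to obtain
\[
|\mathcal{R}^\e|\le \|U_\nu^\e\|_{L^\infty_t(L^2_x)}\|D_x^2\varphi_i\|_{L^1_t(L^2_x)}\int_{\mathbb{R}^d}\frac{|v|^3(1+|v|+|v|^2)}{\nu(v)}M(v)\,dv.
\]
The $v$-integral is controlled by the integrability condition \eqref{int.M.1} (bounding $\int|v|^{j+3}M/\nu\,dv$ for $0\le j\le 2$), the $L^\infty_tL^2_x$-norm of $U_\nu^\e$ is bounded by \eqref{bound.U.nu}, and $D_x^2\varphi_i$ is compactly supported smooth. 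Consequently $\mathcal{R}^\e$ is uniformly bounded, so $R^\e=\e^{2-\gamma}\mathcal{R}^\e\to 0$ as $\e\to 0$ since $\gamma<2$.

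The main technical point is making sure the fifth velocity moment $\int|v|^5M/\nu\,dv$ is finite: this is precisely what the parameter choices in Assumption~\ref{ass.par} (i.e., $\alpha+\tilde\beta<6$ in the heavy-tailed case and $\beta^*<d+3$ in the Gaussian case) are designed to give, and it is the reason why the remainder closes at order $\e^{2-\gamma}$ rather than with an unbounded constant.
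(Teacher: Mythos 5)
Your proposal is correct and follows essentially the same route as the paper: expand $\nu(\chi_i^\e-\varphi_i)$ via \eqref{exp.chi}, identify the order-$\e^{1-\gamma}$ term through parity and the moment conditions \eqref{mom.M} (the momentum contribution vanishing by oddness), and kill the order-$\e^{2-\gamma}$ remainder with a Cauchy--Schwarz estimate of type \eqref{est.L2} together with the integrability of $|v|^{3+j}M/\nu$ from \eqref{int.M.1} and the bound \eqref{bound.U.nu}. The only cosmetic difference is that you pair $\|U_\nu^\e\|_{L^\infty_t(L^2_x)}$ with $\|D_x^2\varphi_i\|_{L^1_t(L^2_x)}$ where the paper uses $L^2_{t,x}$ norms on both factors; both close the argument since $\gamma<2$.
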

\begin{proof} We shall employ the expansion of $\nu (\chi_i^\e-\varphi_i)$ according to \eqref{exp.chi}:
\beqs
&&\e^{-\gamma} \int_0^\infty\int_{\mathbb{R}^{2d}} v_i M \, \phi\cdot U_\nu^\e\,   \nu (\chi_i^\e-\varphi_i) dv dx dt  \\
&&\quad =\e^{1-\gamma} \int_0^\infty\int_{\mathbb{R}^{d}}\left(\int_{\mathbb{R}^d} v_i M \, \phi\cdot U_\nu^\e\, v dv \right)\cdot \nabla_x  \varphi_i \, dx dt \\
&&\qquad
+\e^{2-\gamma} \int_0^\infty\int_{\mathbb{R}^{2d}}\int_{0}^\infty v_i e^{-\nu z} v^T D_x^2 \varphi_i(t,x+\e vz) v dz M \, \phi\cdot U_\nu^\e\,  dv  dx dt \\
&&\quad =: I_1^\e + I_2^\e\,.
\eeqs
We start with showing that $I_2^\e\rightarrow 0$ performing an estimation of the type \eqref{est.L2}:
\beqs
|I_2^\e|\leq C\e^{2-\gamma}\|D_x^2\varphi_i\|_{L^2_{t,x}}\|U_\nu^\e\|_{L^2_{t,x}}\int_{\mathbb{R}^d} \frac{|v|^3+|v|^5}{\nu} M dv\leq C\e^{2-\gamma}\rightarrow 0\,.
\eeqs
The integral $I_1^\e$ gives rise to  the Boussinesq equation. The integrand of $I_1^\e$ containing the macroscopic momentum is odd and hence vanishes, such that
\beqs
I_1^\e &=& \e^{1-\gamma} \int_0^\infty\int_{\mathbb{R}^{d}} (\r_\nu^\e+ \theta_\nu^\e)  \pa_{x_i} \varphi_i  dx dt\,,
\eeqs
which concludes the proof.
\end{proof}

\begin{lem}\label{conv.frac} Let the assumptions of Lemma \ref{conv.time} hold. Then the fractional derivative arises from the following integrals as $\e \rightarrow 0$:
\begin{itemize}
\item[(i)] For the case of heavy-tailed equilibrium distributions, i.e. $M=\tilde M$ and $\nu = \tilde \nu$, we have
\[\e^{-\tilde \gamma}\int_{\mathbb{R}^{d}} \tilde \nu \tilde M \frac{|v|^4}{2d} (\chi^{\e}-\varphi)  dv  \rightarrow  -\tilde \kappa (-\Delta_x  )^{\tilde \gamma/2}\varphi
\]
strongly in $L^2_{t,x}$.
\item[(ii)] For the case of Gaussian equilibrium distributions, i.e. $M=M^*$ and $\nu = \nu^*$, we have
\[\e^{-\gamma^*} \int_{\mathbb{R}^d} \nu^* M^* (\chi^{\e}-\varphi) dv \rightarrow -\kappa^* (-\Delta_x  )^{\gamma^*/2}\varphi    \]
strongly in $L^2_{t,x}$.
\end{itemize}
\end{lem}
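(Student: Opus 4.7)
The plan is to obtain pointwise convergence in $(t,x)$ via an explicit scaling change of variables that yields the singular integral representation of the fractional Laplacian, and then to upgrade to strong $L^2_{t,x}$ convergence by dominated convergence. The starting point in both cases is the identity \eqref{aux.rel.2}, which gives
\[
\nu(v)\bigl(\chi^\e-\varphi\bigr) \,=\, \int_0^\infty \nu(v)^2 e^{-\nu(v) z}\bigl(\varphi(t,x+\e v z)-\varphi(t,x)\bigr)\,dz.
\]
Inserting this into the integral over $v$ reduces the computation to a triple integral in $(r,z,\omega)$ after polar decomposition $v=r\omega$, and the game is to choose a scaling substitution that absorbs the prefactor $\e^{-\gamma}$ into the Jacobian.

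For part (i) I split $\int_{\mathbb{R}^d} = \int_{|v|\geq 1} + \int_{|v|<1}$ and first treat the outer region, where $\tilde\nu=|v|^{\tilde\beta}$ and $\tilde M = c_0|v|^{-(\alpha+d)}$ hold exactly. After polar decomposition I perform the substitutions $\rho := \e r z$ and $u := r^{\tilde\beta-1}\rho/\e$. A bookkeeping calculation shows that all $\e$-powers cancel exactly thanks to the identity $\tilde\gamma=(\alpha-\tilde\beta-4)/(1-\tilde\beta)$. The $u$-integral then yields $\Gamma(\tilde\gamma+1)$ and re-assembling $w := \rho\omega \in \mathbb{R}^d$ gives
\[
\lim_{\e\to 0}\,\e^{-\tilde\gamma}\!\int_{|v|\geq 1}\!\tilde\nu \tilde M\,\tfrac{|v|^4}{2d}(\chi^\e-\varphi)\,dv
\,=\,\frac{c_0\,\Gamma(\tilde\gamma+1)}{2d(1-\tilde\beta)}\int_{\mathbb{R}^d}\frac{\varphi(t,x+w)-\varphi(t,x)}{|w|^{d+\tilde\gamma}}\,dw,
\]
which, by symmetrization and the singular-integral representation recalled after Theorem \ref{thm.1}, equals $-\tilde\kappa(-\Delta_x)^{\tilde\gamma/2}\varphi$ for a positive constant $\tilde\kappa$. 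The contribution from $\{|v|<1\}$ is handled via the expansion \eqref{exp.chi}: the first-order term vanishes by odd symmetry in $v$, while the quadratic remainder is $O(\e^{2-\tilde\gamma})=o(1)$ thanks to the moment bound \eqref{int.M.1}.

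Part (ii) follows the same scheme, with the singular behaviour now located at $|v|\to 0$: on $\{|v|\leq 1\}$ one uses $\nu^*=|v|^{\beta^*}$ (the Gaussian factor $M^*$ is harmless since it is bounded and even), the analogous substitutions $\rho:=\e r z$, $u:=r^{\beta^*-1}\rho/\e$ produce an exact cancellation of $\e$-powers thanks to $\gamma^*=(\beta^*+d)/(\beta^*-1)$, and the region $\{|v|>1\}$, where $\nu^*$ is smooth and bounded below, contributes only an $o(1)$ remainder by the same Taylor argument.

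The main technical point, and the step I expect to require the most care, is the upgrade from pointwise convergence in $(t,x)$ to strong $L^2_{t,x}$ convergence. For this I rely on two ingredients: first, the uniform bound $|\varphi(t,x+w)-\varphi(t,x)|\leq C\min(|w|,1)$ valid for $\varphi\in\D([0,\infty)\times\mathbb{R}^d)$, which together with the weight $|w|^{-(d+\gamma)}$ produces an integrable majorant in $w$ since $\gamma\in(1,2)$; second, the uniform bound $\|(-\Delta_x)^{\gamma/2}\varphi\|_{L^2_{t,x}}<\infty$, which dominates the $L^2_{t,x}$-norm of both the truncated integrands and of the limit. The Lebesgue dominated convergence theorem then delivers the desired strong convergence. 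Vanishing in $L^2_{t,x}$ of the correction terms arising from the bounded regions of $\nu$ and $M$ is obtained via the elementary estimate \eqref{est.L2} coupled with the moment bounds \eqref{int.M.1}--\eqref{int.M.2}.
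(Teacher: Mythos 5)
Your overall strategy coincides with the paper's: split the velocity domain into the singular and regular regions, insert \eqref{aux.rel.2} and perform a scaling change of variables that converts $\e^{-\gamma}$ into the kernel $|w|^{-(d+\gamma)}$ (your polar substitutions $\rho=\e rz$, $u=r^{\tilde\beta-1}\rho/\e$ are equivalent to the paper's $w=\e v/|v|^{\tilde\beta}$ followed by $s=\nu z$ and $y=sw$), identify the limit with the singular-integral representation of $(-\Delta)^{\gamma/2}$, and dispose of the regular region by the expansion \eqref{exp.chi} together with oddness and the moment bounds. Two steps, however, do not go through as written. The dominated-convergence upgrade to $L^2_{t,x}$ is flawed: the majorant $\min(|w|,1)\,|w|^{-(d+\gamma)}$ behaves like $r^{-\gamma}$ near $w=0$ and is \emph{not} integrable for $\gamma\in(1,2)$; you need the second-order cancellation $|\varphi(t,x+sw)-\varphi(t,x)-sw\cdot\nabla_x\varphi(t,x)|\leq Cs^2|w|^2$ on $|w|\leq 1$ (i.e. the symmetrized bound $\min(|w|^2,1)$). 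More importantly, the statement that $\|(-\Delta)^{\gamma/2}\varphi\|_{L^2_{t,x}}<\infty$ ``dominates the $L^2$-norms'' is not a dominated-convergence hypothesis: DCT needs a pointwise integrable majorant in $(t,x)$, and a bound on norms does not supply one. The paper avoids this issue by writing the difference $\tilde J_2^\e-\tilde J^0$ \emph{exactly} as an integral of the second-order Taylor remainder over the shrinking region $\{|w|\leq\e\}$ and bounding its $L^2_{t,x}$-norm by $C\|D_x^2\varphi\|_{L^2_{t,x}}\,\e^{2-\gamma}\to0$ via the estimate \eqref{est.L2}; you should replace your DCT step by such a quantitative bound.

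The second, more substantial gap is in part (ii): the Gaussian factor is not ``harmless''. After the substitution $w=\e v/|v|^{\beta^*}$ the weight $M^*(v)$ becomes $(2\pi)^{-d/2}\exp\bigl(-\tfrac12(\e/|w|)^{2/(\beta^*-1)}\bigr)$, which depends on both $\e$ and $w$ and equals $e^{-1/2}$, not $1$, on the inner boundary $|w|=\e$. Proving that replacing this factor by its pointwise limit $1$ produces an error vanishing in $L^2_{t,x}$ is precisely where the paper does nontrivial work: it isolates the two extra terms $L_1^{\e*}$ (on $|w|\geq1$) and $L_2^{\e*}$ (on $\e\leq|w|\leq1$), and for the latter must split the radial integral once more at $\e^a$, $a\in(0,1)$, to balance the factor $1-e^{-\frac12(\e/r)^{2/(\beta^*-1)}}$ against the singular weight $r^{1-\gamma^*}$. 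Your proposal contains no argument for this step, and without it neither the identification of the limiting constant $\kappa^*$ nor the strong $L^2_{t,x}$ convergence in (ii) is established.
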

\begin{proof}
We shall first demonstrate the convergence for the heavy-tailed equilibrium distributions stated in (i). We therefore split the domain of integration as follows:
\beqs
\tilde J^\e_1=\e^{-\tilde \gamma}\int_{|v|\leq 1} |v|^4 \tilde \nu  \tilde M (\chi^{\e} - \varphi) dv, \quad \tilde J^\e_2=\e^{-\tilde\gamma}\int_{|v|\geq 1} |v|^4 \tilde \nu  \tilde M (\chi^{\e} - \varphi) dv\,.
\eeqs
We expand the first integral using \eqref{exp.chi}:
\beqs
\tilde J^\e_1&=&\e^{1-\tilde\gamma}\int_{|v|\leq 1} |v|^4 v   \tilde M  dv \cdot \nabla_x \varphi + \e^{2-\tilde\gamma}\int_{|v|\leq 1}\int_0^\infty |v|^4e^{-\tilde \nu z} v^TD_x^2\varphi (t,x+\e v z) v \tilde M dz dv\,.
\eeqs
The first integrand is odd, therefore the integral vanishes. The second integrand is uniformly bounded in $|v|\leq 1$, hence $\tilde J^\e_1\rightarrow 0$ as $\e\rightarrow 0$ uniformly in $t,x$ and also $L^2_{t,x}$. For the integral $\tilde J_2^\e$ we use the behaviours of $\tilde M$ and $\tilde v$, as well as \eqref{aux.rel.2}:
\beqs
\tilde J_2^\eps&=&\e^{-\tilde\gamma}c_0\int_{|v|\geq 1} |v|^{4-d-\alpha+\tilde\beta} \int_0^\infty \tilde\nu \e^{-\tilde \nu z} (\varphi(t,x+\e vz) - \varphi(t,x)) dz dv\\
&=& \e^{-\tilde\gamma} c_0 \int_{|v|\geq 1} |v|^{4-d-\alpha +\tilde \beta} \int_0^\infty e^{-s}  \left(\varphi\left(t,x+\e \frac{v}{\tilde \nu} s\right)- \varphi(t,x)\right) ds dv
\eeqs
where we  substituted $s=\tilde \nu z$. We recall that $\tilde \beta<1$ and perform the change of variables
\beq\label{change.var}
w=\e\frac{v}{|v|^{\tilde \beta}}, \qquad dv=\frac{1}{1-\tilde \beta}\left(\frac{|w|^{\tilde\beta}}{\e}\right)^\frac{d}{1-\tilde \beta} dw \,,
\eeq
where for the calculation of the determinant of the Jacobian-matrix Silvester's theorem can be applied. Recalling  $\tilde\gamma = (\alpha -\tilde\beta -4)/(1-\tilde\beta)$, we obtain
\beqs
\tilde J_2^\e&=&\frac{c_0}{1-\tilde\beta}\int_{|w|\geq \e}\int_0^\infty e^{-s}\frac{\varphi(t,x+w s)-\varphi(t,x)}{|w|^{d+\tilde\gamma }}ds dw\\
&=&\frac{c_0}{1-\tilde\beta}\int_0^\infty \int_{|y|>\e s}\frac{\varphi(t,x+y)-\varphi(t,x)}{|y|^{d+\tilde\gamma}}dy \, e^{-s} s^{\tilde\gamma} ds
\eeqs
where substituted $y= w s$.
Due to the definition of the principle value  we have the pointwise convergence in $t,x$ of
\beqs
\tilde J_2^\e \  \rightarrow \ \tilde J^0
\eeqs
with $J^0$ being defined as
\beq
\tilde J^0 &=& \frac{c_0}{1-\tilde\beta} PV \int_{\mathbb{R}^d}\int_0^\infty e^{-s}\frac{\varphi(t,x+sw)-\varphi(t,x)}{|w|^{d+\tilde\gamma}} ds dw  \nonumber\\
&=&\frac{c_0}{1-\tilde\beta}  PV \int_{\mathbb{R}^d}\frac{\varphi(t,x+y)-\varphi(t,x)}{|y|^{d+\tilde\gamma}} dy \, \int_0^\infty e^{-s} s^{\tilde\gamma} ds \nonumber\\
&=&\Gamma(1+\tilde \gamma)\, \tilde \kappa \ PV \int_{\mathbb{R}^d}\frac{\varphi(t,x+y)-\varphi(t,x)}{|y|^{d+\tilde\gamma}} dy \\
\label{J0}&=&-\tilde \kappa (-\Delta)^{\tilde \gamma/2} \varphi \nonumber
\eeq
with $\tilde \kappa = \frac{c_0\Gamma(\tilde\gamma+1)}{1-\tilde\beta}$. For proving  convergence in $L^2_{t,x}$ we proceed as in \cite{BMP} and split $\tilde J^0$ into
\beq
\frac{1}{\tilde \kappa} \, \tilde J^0&=&\int_{|w|\geq 1}\int_0^\infty e^{-s}\frac{\varphi(t,x+sw)-\varphi(t,x)}{|w|^{d+\tilde \gamma}} ds dw \nonumber \\
\label{J0.2}&&\quad + \int_{|w|\leq 1}\int_0^\infty e^{-s}\frac{\varphi(t,x+sw)-\varphi(t,x)- s w\cdot \nabla_x \varphi(t,x)}{|w|^{d+\tilde \gamma}} ds dw\,.
\eeq
These integrals are defined in the classical sense. Splitting $\tilde J_2^\e$ into the integral over the domain $\{|w|\geq 1\}$ and  $\{\e<|w|<1\}$ respectively, we obtain
\beq
\frac{1}{\tilde \kappa}(\tilde J_2^\e - \tilde J^0 )&=& - \int_{|w|\leq \e}\int_0^\infty e^{-s} \frac{\varphi(t,x+sw)-\varphi(t,x)- s w\cdot \nabla_x \varphi(t,x)}{|w|^{d+\tilde\gamma}} ds dw\nonumber\\
\label{J.conv}&=& - \int_{|w|\leq \e}\int_0^\infty e^{-s} \frac{w^T D_x^2\varphi(t,x+sw)\cdot w}{|w|^{d+\tilde\gamma}} ds dw \,,
\eeq
where we have performed integration by parts twice. Due to the fact that
\[\int_0^\infty e^{-s} ds \int_{|w|\leq \e}\frac{1}{|w|^{d+\tilde\gamma - 2}}  dw \leq C\e^{2-\tilde\gamma} \rightarrow 0\]
we deduce the (strong) $L^2_{t,x}$-convergence of $\tilde J_2^\e-\tilde J^0\rightarrow 0$, which concludes the proof for the heavy-tailed equilibrium distributions.

We shall now derive the fractional Laplacian for the Gaussian equilibrium distributions $M^*(v)=\frac{1}{(2\pi)^{d/2}}e^{-\frac{|v|^2}{2}}$ as stated in (ii). We proceed in a similar fashion to \cite{BMP} and split the integral in (ii) as follows:
\beqs
J^{\e *}_1=\e^{-\gamma^*}\int_{|v|\leq 1}  \nu^*  M^* (\chi^{\e} - \varphi) dv , \quad  J^{\e  *}_2=\e^{-\gamma^*}\int_{|v|\geq 1} \nu^*   M^* (\chi^{\e} - \varphi) dv \, .
\eeqs
 As we shall see below the degeneracy occurs in the first integral, whereas the second integral vanishes in the limit. Expanding $J^{\e *}_2$ according to \eqref{exp.chi} we obtain
\[J^{\e *}_2=\e^{1-\gamma^*}\int_{|v|\geq 1}    M^* v dv \cdot \nabla_x \varphi + \e^{2-\gamma^*}\int_{|v|\geq 1}\int_0^\infty e^{-\tilde \nu^* z} v^T D_x^2\varphi(t,x+\e v z) v M^* dz dv\, . \]
The first integral vanishes, since the integrand is odd. The second integrand is uniformly bounded  in $\{|v|\geq 1\}$, hence the second integral also converges to $0$ uniformly and in $L^2_{t,x}$.
We shall now turn to the integral $J^{\e *}_1$ over the domain of small velocities. Observe that we cannot expand $\nu^* (\chi^\e -\varphi)$ according to \eqref{exp.chi} as above, since $\int_{|v|\leq 1} \frac{|v|^2 M^*}{\nu^*} dv$ is unbounded. Hence we expand $\nu^*(\chi^\e-\varphi)$ only up to first order as given in \eqref{exp.chi.2} and proceed as in \cite{BMP}:
\beqs
J^{\e *}_1 &=& \e^{1-\gamma^*} \int_{|v|\leq 1} \int_0^\infty e^{-\nu^* z} \nu^* v\cdot \nabla_x \varphi(t,x+\e v z) dz M^* dv\\
&=& \e^{1-\gamma^*} \int_{|v|\leq 1} \int_0^\infty e^{-s}  v\cdot \nabla_x \varphi\left(t,x+\e \frac{v}{\nu^*}s\right) ds M^* dv\, .
\eeqs
We again perform a change of variables similar to \eqref{change.var}, noting that here $\beta^*>1$, such that the domain of integration is inverted:
\[w= \e \frac{v}{|v|^{\beta^*}}\,, \qquad dv=\frac{1}{\beta^*-1}\left(\frac{\e }{|w|^{\beta^* }}\right)^{\frac{d}{\beta^*-1}}dw \, .\]
Recalling $\gamma^*=(\beta^* +d)/(\beta^*-1)$ we obtain
\beqs
J^{\e *}_1 &=& \frac{1}{\beta^*-1}\int_{|w|\geq \e }\int_0^\infty e^{-s} w\cdot \nabla_x\varphi(t,x+s w) ds |w|^{-\frac{\beta^*+d}{\beta^*-1}} M^*\left((\e/|w|)^{\frac{1}{\beta^*-1}}\right)  dw \\
&=&\frac{1}{(2\pi)^{d/2}(\beta^*-1)}\int_{|w|\geq \e }\int_0^\infty e^{-s} \frac{\varphi(t,x+s w)- \varphi(t,x)}{|w|^{d+\gamma^*}} ds \,  e^{-\frac12\left(\frac{\e}{|w|}\right)^{\frac{1}{\beta^*-1}}}  dw \, .
\eeqs
As above we introduce the integral
\beqs
J^{0 *} &=& \frac{1}{(2\pi)^{d/2}(\beta^*-1)}PV \int_{\mathbb{R}^d}\int_0^\infty e^{-s}\frac{\varphi(t,x+sw)-\varphi(t,x)}{|w|^{d+\gamma^*}} ds dw\,,
\eeqs
satisfying the analogous relations given in \eqref{J.conv}. Moreover $J^{0 *}$ can be split into two integrals according to \eqref{J0.2}, from which we can deduce the $L^2_{t,x}$ convergence of $J^{\e *}\rightarrow J^{0 *}$. From the Gaussian equilibrium distributions being non-constant for small velocities two more terms arise here compared to \eqref{J.conv} and \cite{BMP}:
\beqs
&&(\beta^*-1)(2\pi)^{d/2} ( J_1^{\e *}-J^{0 *} ) = \int_{|w|\geq 1} \int_0^\infty e^{-s} \frac{\varphi(t,x+sw)-\varphi(t,x)}{|w|^{d+\gamma^*}}ds \left(e^{-\frac{1}{2}\left(\frac{\e}{|w|}\right)^{\frac{2}{\beta^*-1}}}-1\right) dw  \\
&&\qquad + \int_{\e \leq |w|\leq 1} \int_0^\infty e^{-s} \frac{\varphi(t,x+sw)-\varphi(t,x)-s w\cdot \nabla_x \varphi(t,x)}{|w|^{d+\gamma^*}}ds \left(e^{-\frac{1}{2}\left(\frac{\e}{|w|}\right)^{\frac{2}{\beta^*-1}}}-1\right) dw \\
&&\qquad  + \int_{|w|\leq \e } \int_0^\infty e^{-s} \frac{\varphi(t,x+sw)-\varphi(t,x)-s w\cdot \nabla_x \varphi(t,x)}{|w|^{d+\gamma^*}}ds dw\\
&&\quad =:L_1^{\e *} +  L^{\e *}_2 + L^{\e *}_3 \, .
\eeqs
 For the third integral $L_3^{\e *}$ the convergence to $0$ in $L^2_{t,x}$  is obtained in the same fashion to \eqref{J.conv} above. For $L_1^{\e *}$ we employ an estimation as in \eqref{est.L2}:
\beqs
\|L_1^{\e *}\|_{L^2_{t,x}}\leq 2 \|\varphi\|_{L^2_{t,x}}\int_0^s e^{-s} ds \int_{|w|\geq 1}|w|^{-(d+\gamma^*)} \left(1-e^{-\frac12\left(\frac{\e}{|w|}\right)^{\frac{2}{\beta^*-1}}}\right)dw \leq C \left(1-e^{-\frac12\e^{\frac{2}{\beta^*-1}}}\right)\rightarrow 0
\eeqs
To see the convergence of the remaining term $L_2^{\e *}$ we perform integration by parts twice and  bound
\beqs
\|L_2^{\e *}\|_{L^2_{t,x}}\leq \|D_x^2\varphi\|_{L^2_{t,x}}\int_{\e \leq |w|\leq 1}\int_0^s e^{-s} |w|^{-(d+\gamma^*-2)} \left(1-e^{-\frac12\left(\frac{\e}{|w|}\right)^{\frac{2}{\beta^*-1}}}\right)ds dw
\eeqs
We now split the domain of integration  in the latter integral once more. For any $a \in(0,1)$
\beqs
&&\int_{\e \leq |w|\leq 1} \int_0^\infty \frac{e^{-s}}{|w|^{d+\gamma^*-2}}\left(1- e^{-\frac12\left(\frac{\e}{|w|}\right)^{\frac{2}{\beta^*-1}}}\right) ds dw\leq C\int_{\e }^1
r^{1-\gamma^*}\left(1- e^{-\frac{1}{2}\left(\frac{\e}{r}\right)^{\frac{2}{\beta^*-1}}} \right) dr\\
&&\quad = C \int_{\e c^*}^{\e^a}
r^{1-\gamma^*}\left(1- e^{-\frac{1}{2}\left(\frac{\e}{r}\right)^{\frac{2}{\beta^*-1}}} \right)  dr +
\int_{\e^a}^{1}
r^{1-\gamma^*}\left(1- e^{-\frac{1}{2}\left(\frac{\e}{r}\right)^{\frac{2}{\beta^*-1}}} \right) dr \\
&&\quad \leq C r^{2-\gamma^*}\big|_{\e }^{\e^{a}} +
C \left(1- e^{-\frac{
\e^{\frac{2(1-a)}{\beta^*-1}}}{2}
}\right) \rightarrow 0\,.
\eeqs
By dominated convergence, this implies the strong convergence of $J_2^{\e*}$ to $J^{0*}$ in $L^2_{t,x}$, which concludes the proof of the Lemma.
\end{proof}

\begin{lem}\label{conv.mom.2} Let the assumptions of Lemma \ref{conv.time} hold and recall that $\psi_{d+1}=\frac{|v|^2-(d+2)}{2}$. Then, as $\e \rightarrow 0$, we have
\beqs
&&\e^{-\gamma} \int_0^\infty\int_{\mathbb{R}^{2d}} \psi_{d+1} M \, \phi\cdot U_\nu^\e \,  \nu (\chi^\e-\varphi) \, dv dx dt  \  \rightarrow \   -\kappa \int_0^\infty\int_{\mathbb{R}^d}\theta (-\Delta)^{\gamma/2} \varphi \, dx dt \,.
\eeqs
\end{lem}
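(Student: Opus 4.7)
The plan is to substitute the decomposition $\phi\cdot U^\e_\nu = \r^\e_\nu + v\cdot m^\e_\nu + \frac{|v|^2-d}{2}\theta^\e_\nu$ into the integrand and treat the resulting three pieces $I^\e_\r,\ I^\e_m,\ I^\e_\theta$ separately, following the strategy of Lemmas \ref{conv.mom.1} and \ref{conv.frac}.

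For $I^\e_m$ I would insert the expansion \eqref{exp.chi} of $\nu(\chi^\e-\varphi)$. The leading $\e^{1-\gamma}$ term involves the moment $\int \psi_{d+1}\, M\, v_k v_l\, dv = \delta_{kl}\int \frac{|v|^2-(d+2)}{2} M v_1^2\, dv$; a direct computation using the normalizations \eqref{mom.M} together with the radial symmetry of $M$ yields $\int M|v|^2 v_1^2\, dv = d+2$, so this moment vanishes. The remaining $\e^{2-\gamma}$ contribution is controlled via the estimate \eqref{est.L2} by the finite moment $\int (1+|v|^2)|v|^3 M/\nu\, dv$ which is bounded by \eqref{int.M.1}, so $I^\e_m\to 0$. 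For $I^\e_\r$, the first-order term again vanishes by oddness of the integrand in $v$. In the heavy-tailed case the second-order remainder is $O(\e^{2-\gamma})$ times $\int |v|^4 \tilde M/\tilde\nu\, dv$, bounded by \eqref{int.M.1}, so that $I^\e_\r\to 0$. In the Gaussian case the constant piece $-\frac{d+2}{2}$ of $\psi_{d+1}$ couples to the divergent moment $\int |v|^2 M^*/\nu^*\, dv = \infty$, and the corresponding integral is handled directly by Lemma \ref{conv.frac}(ii), producing a nontrivial $\r$-contribution of fractional type.

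For $I^\e_\theta$ I would expand $\psi_{d+1}\cdot\frac{|v|^2-d}{2} = \frac{|v|^4}{4} - \frac{d+1}{2}|v|^2 + \frac{d(d+2)}{4}$. In the heavy-tailed case only the $|v|^4$ coefficient is paired with a divergent moment at infinity, so Lemma \ref{conv.frac}(i) yields $\e^{-\tilde\gamma}\int \tilde\nu\tilde M\,\tfrac{|v|^4}{4}(\chi^\e-\varphi)\, dv \to -\tfrac{d\tilde\kappa}{2}(-\Delta_x)^{\tilde\gamma/2}\varphi$, while the $|v|^2$ and constant pieces are $O(\e^{2-\tilde\gamma})$ and vanish. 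In the Gaussian case the degeneracy at the origin inverts the situation: the $|v|^4$ and $|v|^2$ contributions are both of order $\e^{2-\gamma^*}$ because the moments $\int |v|^k M^*/\nu^*\, dv$ are bounded for $k\geq 3$ by \eqref{gauss.mom}, and only the constant $d(d+2)/4$ coefficient produces a fractional Laplacian through Lemma \ref{conv.frac}(ii).

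To pass to the limit I would pair the weak-$*$ convergence $\r^\e_\nu,\theta^\e_\nu\rightharpoonup^* \r,\theta$ from Lemma \ref{lem.weak.conv} with the strong $L^2_{t,x}$ convergence provided by Lemma \ref{conv.frac}; this is legitimate because the test function $\varphi$ has compact support in time. In the heavy-tailed case only the $\theta$-contribution survives and directly yields the stated identity with $\kappa = d\tilde\kappa/2$. In the Gaussian case one formally obtains a linear combination $\frac{(d+2)\kappa^*}{2}\r(-\Delta)^{\gamma^*/2}\varphi - \frac{d(d+2)\kappa^*}{4}\theta(-\Delta)^{\gamma^*/2}\varphi$, which collapses to $-\frac{(d+2)^2\kappa^*}{4}\theta(-\Delta)^{\gamma^*/2}\varphi$ after invoking the Boussinesq relation $\r+\theta=0$ that is established jointly with Lemma \ref{conv.mom.1} in the proof of Theorem \ref{thm.1}. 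The hard part will be the combinatorial bookkeeping of which monomials in $v$ arising from the expansion of $\psi_{d+1}\phi$ pair with divergent velocity moments (and therefore require the anomalous scaling of Lemma \ref{conv.frac}) versus those that can be absorbed into the $\e^{2-\gamma}$ Taylor remainder; a secondary subtlety is that in the Gaussian setting the announced limit only materialises after the Boussinesq constraint is imposed, so the lemma is not strictly self-contained but must be read in tandem with the macroscopic derivation.
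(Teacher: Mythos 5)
Your proposal is correct and follows essentially the same route as the paper: expand $\nu(\chi^\e-\varphi)$ via \eqref{exp.chi}, kill the first-order term by oddness and the moment identity $\int(|v|^2-(d+2))v\otimes v\,M\,dv=0$, sort the second-order contributions according to which velocity moments of $M/\nu$ diverge, and invoke Lemma \ref{conv.frac} (i) for the $|v|^4\theta_\nu^\e$ piece in the heavy-tailed case and (ii) for the constant-coefficient piece in the Gaussian case (the paper groups by powers of $|v|$ rather than by macroscopic moment, but the bookkeeping is identical, and your coefficients agree with the paper's $(d+2)(-\r_\nu^\e+\tfrac{d}{2}\theta_\nu^\e)$ term). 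Your observation that in the Gaussian case the stated limit only takes the form $-\kappa\,\theta(-\Delta)^{\gamma/2}\varphi$ after the Boussinesq relation $\r+\theta=0$ is imposed is accurate — the paper leaves this implicit and resolves it in the proof of Theorem \ref{thm.1}.
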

\begin{proof} We shall again employ the expansion of $\nu (\chi^\e-\varphi)$ according to \eqref{exp.chi}:
\beqs
&&\e^{-\gamma} \int_0^\infty\int_{\mathbb{R}^{2d}} \psi_{d+1} M \, \phi\cdot U_\nu^\e \,  \nu (\chi^\e-\varphi) dv dx dt  \\
&&\quad =\e^{1-\gamma} \int_0^\infty\int_{\mathbb{R}^{d}}\int_{\mathbb{R}^d} \psi_{d+1} M \, \phi \cdot U_\nu^\e v\,  dv \cdot \nabla_x \varphi\,  dx dt \\
&&\qquad
+\e^{2-\gamma} \int_0^\infty\int_{\mathbb{R}^{2d}}\int_{0}^\infty \psi_{d+1} e^{-\nu z} v^T D_x^2 \varphi(x+\e vz,t) v\,  dz M \, \phi \cdot U_\nu^\e\,  dv dx dt \\
&&\quad =: I_1^\e + I_2^\e \, .
\eeqs
The part in the integrand of $I_1^\e$ containing the macroscopic density and temperature is odd and hence vanishes, therefore we are left with computing only the part containing the momentum:
\beqs
2 I_1^\e &=&\e^{1-\gamma} \int_0^\infty\int_{\mathbb{R}^{d}} \left(\int_{\mathbb{R}^d}\left(|v|^2-(d+2)\right) v\otimes v M \, dv\cdot m_\nu^\e \right) \cdot \nabla_x \varphi \, dx dt \ =\ 0\, ,
\eeqs
which holds due to the moment conditions in \eqref{mom.M}.
 We now turn to the second integral term $I_2^\e$, which gives rise to the fractional Laplacian.  We first order the moments accordingly
\beqs
&&2\, I_2^\e = \e^{2-\gamma}\int_0^\infty\int_{\mathbb{R}^{2d}}\int_{0}^\infty (|v|^2-(d+2)) e^{-\nu z} v^T D_x^2 \varphi(t,x+\e vz)\, v dz M \, \phi \cdot U_\nu^\e \, dv  dx dt \\
&&\quad = \e^{2-\gamma} (d+2) \int_0^\infty\int_{\mathbb{R}^{2d}}\int_{0}^\infty e^{-\nu z} v^T D_x^2 \varphi(x+\e vz,t) v dz M  dv \left(-\r_\nu^\e + \frac{d}{2}\theta_\nu^\e  \right) dx dt \\
&&\qquad + \e^{2-\gamma} \int_0^\infty\int_{\mathbb{R}^{2d}}\int_{0}^\infty \left((|v|^2-(d+2)) v\cdot m_\nu^\e+|v|^2 \left( \r_\nu^\e- (d+1) \theta_\nu^\e \right) \right)\cdot\\
&&\qquad \quad \cdot e^{-\nu z} v^T D_x^2 \varphi(x+\e vz,t) v dz M dv  dx dt \\
&&\qquad + \frac{\e^{2-\gamma}}{2} \int_0^\infty\int_{\mathbb{R}^{2d}}\int_{0}^\infty |v|^4\theta_\nu^\e e^{-\nu z} v^T D_x^2 \varphi(x+\e vz,t) v dz M  dv  dx dt \\
&&\quad =: L_1^\e+ L_2^\e+L_3^\e \, .
\eeqs
 We start with showing that $L_2^\e\rightarrow 0$ for both cases of equilibrium distributions due to \eqref{tail.mom} and \eqref{gauss.mom}
\beqs
|L_2^\e|\leq C \e^{2-\gamma}\|D_x^2\varphi\|_{L^2_{t,x}}\|U_\nu^\e\|_{L^2_{t,x}}\int \frac{|v|^3+|v|^5}{\nu} M dv \leq C\e^{2-\gamma}\rightarrow 0\,.
\eeqs
Moreover for the heavy-tailed equilibrium distributions  the integral term $L_1^\e$ also vanishes in the limit due to \eqref{tail.mom} using the same argumentation. The third integral term $L_3^\e$ corresponds, after integration by parts twice and inserting the definition of $\nu (\chi^\e -\varphi)$, to the integral in Lemma  \ref{conv.frac} (i) and hence converges towards the fractional Laplacian. For the case of Gaussian equilibrium the roles of the integrals $L_1^\e$ and $L_3^\e$ are interchanged, namely $L_3^\e$ vanishes and from $L_1^\e$ we obtain the fractional Laplacian according to Lemma \ref{conv.frac} (ii).
\end{proof}

We shall now state the corresponding convergence properties for the fractional Stokes limit without temperature. In fact, in the weak form \eqref{weak.form} we only need to consider the moment $\bar\psi (v)=v$. Since in this case we only treat the case of heavy-tailed equilibrium distributions as stated in Assumption \ref{ass.par.2}, no distinction between the types of equilibrium distributions has to made here. Hence for the fractional Stokes limit we skip the tildes for $M$ and $\nu$ in the following. 

\begin{lem}\label{conv.stokes} Let Assumption \ref{ass.par.2} hold and let $\chi_i^\e$ be the auxiliary functions as defined above \eqref{aux.equ} for corresponding $\varphi_i \in {\cal D}((0,\infty)\times \mathbb{R}^d)$ and let $f^\e$ be the weak solution as in Proposition \ref{prop.ex}.
\begin{itemize}
\item[(i)] The weak form of the time derivatives in \eqref{weak.form} for $\bar \psi=v$ converges in the sense of Lemma \ref{conv.time} with the macroscopic moments $U$ being replaced by $\bar U$ as $\e \rightarrow 0$.
\item[(ii)] For  $\bar \psi_i =v_i$ we have for the right hand side in the weak formulation of \eqref{weak.form}:
\beqs
&&\e^{-\gamma} \int_0^\infty\int_{\mathbb{R}^{2d}} v_i M \, \bar \phi\cdot \bar U_\nu^\e \,  \nu (\chi_i^\e-\varphi_i) \, dv dx dt \\
 &&\quad =  -\e^{1-\gamma} \int_0^\infty\int_{\mathbb{R}^{d}} \varphi_i \pa_{x_i}\r^\e_\nu dx dt-\kappa \int_0^\infty\int_{\mathbb{R}^d} m_i (-\Delta)^{\gamma/2} \varphi_i \, dx dt + \bar R^\e_i \,,
\eeqs
where $\bar R_i^\e\rightarrow 0$ for all $i\in \{1,\dots, d\}$.
\end{itemize}
\end{lem}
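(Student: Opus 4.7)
The plan is to adapt the arguments of Lemmas~\ref{conv.time} and~\ref{conv.frac}(i) to the Stokes-only setting, where in the absence of a temperature equation the fractional Laplacian has to emerge from the $v\otimes v$-moment of $f^\e$ rather than from a $|v|^4$-moment. Part (i) is immediate: Lemma~\ref{lem.weak.conv} under Assumption~\ref{ass.par.2} furnishes the weak-$*$ convergence $f^\e\rightharpoonup^* M\bar\phi\cdot\bar U$ in $L^\infty_t(L^2_{x,v}(M^{-1}))$, while \eqref{conv.chi.t} extends verbatim from the scalar $\phi$ to $\bar\psi=v$, giving $v\,\pa_t\chi_i^\e\to v\,\pa_t\varphi_i$ strongly in $L^\infty_t(L^2_{x,v}(M))$ and likewise for $v\,\chi_i^\e(0)$, so the duality argument of Lemma~\ref{conv.time} applies unchanged with $U$ replaced by $\bar U$ and $\phi$ by $\bar\phi$.

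For part (ii) I split $\bar\phi\cdot\bar U_\nu^\e=\r_\nu^\e+v\cdot m_\nu^\e$ and analyse the two pieces separately. On the $\r_\nu^\e$-piece I substitute the expansion~\eqref{exp.chi}: the isotropy of $M$ together with~\eqref{mom.M} gives $\int v_iv_j M\,dv=\delta_{ij}$, so the first-order term equals $\e^{1-\tilde\gamma}\int\r_\nu^\e\,\pa_{x_i}\varphi_i$, which upon integration by parts in $x$ matches the first term on the right-hand side of the claim; the $O(\e^2)$ Taylor remainder is controlled by $\|\bar U_\nu^\e\|_{L^2_{t,x}}\|D_x^2\varphi_i\|_{L^2_{t,x}}\int(|v|^3/\nu)M\,dv$, finite by the $k=3$ case of \eqref{tail.mom.2}, and therefore enters $\bar R^\e_i$ at order $\e^{2-\tilde\gamma}$. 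For the $m_\nu^\e$-piece $\sum_j m_{\nu,j}^\e K_{ij}^\e$ with $K_{ij}^\e:=\e^{-\tilde\gamma}\int v_iv_j M\nu(\chi_i^\e-\varphi_i)\,dv$, a naive second-order expansion would require $\int|v|^4\,M/\nu\,dv$, which is infinite by~\eqref{tail.mom.2}. I therefore run the analysis of Lemma~\ref{conv.frac}(i) directly on $K_{ij}^\e$: the $O(\e)$ Taylor contribution vanishes by the parity $\int v_iv_jv_k M\,dv=0$; on $\{|v|\le 1\}$ a second-order expansion gives $O(\e^{2-\tilde\gamma})$; on $\{|v|\ge 1\}$ the substitutions $w=\e v/|v|^{\tilde\beta}$ and $s=\tilde\nu z$ — now with the exponent $\tilde\gamma=(\alpha-\tilde\beta-2)/(1-\tilde\beta)$ of Assumption~\ref{ass.par.2} in place of that of Assumption~\ref{ass.par} — reduce $K_{ij}^\e$ in the limit to $K_{ij}^0:=\frac{c_0\Gamma(1+\tilde\gamma)}{1-\tilde\beta}\,PV\!\int \omega_i(y)\omega_j(y)\,\frac{\varphi_i(t,x+y)-\varphi_i(t,x)}{|y|^{d+\tilde\gamma}}\,dy$ with $\omega(y)=y/|y|$, strong $L^2_{t,x}$-convergence being verified exactly as in Lemma~\ref{conv.frac}(i).

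The remaining task, which I expect to be the main obstacle, is to pass from this tensor-valued operator $K_{ij}^0$ to the scalar $-\kappa(-\Delta)^{\tilde\gamma/2}$ acting on $m_i$ that appears in the claim. Rotation covariance and homogeneity of degree $\tilde\gamma$ force the Fourier symbol to split as $\widehat{K_{ij}^0}(k)=a_0\,k_ik_j|k|^{\tilde\gamma-2}+b_0\,\delta_{ij}|k|^{\tilde\gamma}$, with constants $a_0,b_0$ tied by the trace identity $a_0+db_0=-1/C_{d,\tilde\gamma}$. The isotropic piece, summed against $m_{\nu,j}^\e$, produces $b_0\,m_{\nu,i}^\e(-\Delta)^{\tilde\gamma/2}\varphi_i$, whose limit is $-\kappa\int m_i(-\Delta)^{\tilde\gamma/2}\varphi_i$ with $\kappa:=-b_0$ by the weak-$*$ convergence $m_{\nu,i}^\e\rightharpoonup^* m_i$ of Lemma~\ref{lem.weak.conv}. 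The longitudinal piece equals, up to sign, $a_0\,\pa_{x_i}(-\Delta)^{(\tilde\gamma-2)/2}(\nabla\cdot m_\nu^\e)$ in real space; using the decomposition $m_\nu^\e=m^\e+(m_\nu^\e-m^\e)$ together with~\eqref{equ.rho.eps}, which forces $\nabla\cdot m^\e=-\e^{\tilde\gamma-1}\pa_t\r^\e$, and with the $L^\infty_t(L^2_x)$-strong convergence $m_\nu^\e-m^\e\to 0$ from Lemma~\ref{lem.weak.conv}, two integrations by parts in $x$ and one in $t$ show that the contribution of this longitudinal term paired against $\varphi_i$ is $O(\e^{\tilde\gamma-1})+o(1)\to 0$ and hence absorbed in $\bar R^\e_i$. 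Making this last step quantitative — producing a genuine $\e$-rate for the longitudinal correction against a generic (not necessarily divergence-free) test function $\varphi_i$, and controlling the composite operator $\pa_{x_j}\pa_{x_i}(-\Delta)^{(\tilde\gamma-2)/2}$ acting on it — is the delicate point of the argument.
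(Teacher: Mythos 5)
Your proposal is correct in its overall architecture and reaches the same two structural conclusions as the paper -- the density part of $\bar\phi\cdot\bar U_\nu^\e$ produces the $\e^{1-\gamma}$ pressure-gradient term exactly as you describe, and the momentum part produces a tensor-valued singular operator whose ``diagonal'' piece gives $-\kappa(-\Delta)^{\tilde\gamma/2}$ while its ``longitudinal'' piece pairs with $\nabla_x\cdot m_\nu^\e$ and dies -- but you effect the key decomposition by a genuinely different device. The paper never introduces the Fourier symbol $a_0k_ik_j|k|^{\tilde\gamma-2}+b_0\delta_{ij}|k|^{\tilde\gamma}$: after the change of variables $w=\e v/|v|^{\tilde\beta}$ it writes the kernel as $\frac{w_iw_j}{|w|^{\tilde\gamma+d+2}}=-\frac{1}{\tilde\gamma+d}\,w_i\,\pa_{w_j}\big(|w|^{-(\tilde\gamma+d)}\big)$ and integrates by parts in $w$ on $\{|w|\ge\e\}$, which peels off, in physical space and with an explicit constant, (a) the scalar kernel $\delta_{ij}|w|^{-(d+\tilde\gamma)}/(\tilde\gamma+d)$ acting on $\varphi_i$ and multiplying $m_{\nu i}^\e$ (your isotropic piece), (b) a term containing $s\,w_i\nabla_x\varphi_i(t,x+sw)$ which after an integration by parts in $x$ carries $\nabla_x\cdot m_\nu^\e$ (your longitudinal piece), and (c) a boundary term on $\{|w|=\e\}$ of size $O(\e^{2-\tilde\gamma})$, which your route avoids at the cost of the multiplier theory. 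For the longitudinal piece the paper simply pairs the fixed $L^2_{t,x}$ function $\int_{\R^d}\frac{w_i}{|w|^{\tilde\gamma+d}}\int_0^\infty se^{-s}\varphi_i(t,x+sw)\,ds\,dw$ (bounded by splitting $|w|\lessgtr1$ and one integration by parts in $s$) against $\nabla_x\cdot m_\nu^\e\rightharpoonup0$, so no $\e$-rate is needed; your more quantitative route through $\nabla_x\cdot m^\e=-\e^{\tilde\gamma-1}\pa_t\r^\e$ plus the $O(\e^{\tilde\gamma/2})$ strong convergence of $m_\nu^\e-m^\e$ also closes, after one integration by parts in $t$ and one in $x$, since the Riesz-type image $(-\Delta)^{(\tilde\gamma-2)/2}\pa_{x_i}\varphi_i$ of a test function lies in $L^2_x$. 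The two points you should still supply are the sign $b_0<0$ (so that $\kappa=-b_0>0$ as Theorem \ref{thm.2} requires; it follows from $b_0=c\int\frac{y_2^2}{|y|^{d+\tilde\gamma+2}}(\cos y_1-1)\,dy<0$, whereas the paper's integration by parts yields a manifestly positive constant), and a cleaner statement that on $\{|v|\ge1\}$ no second-order Taylor expansion is performed at all -- the first-order cancellation is the oddness of $y_iy_jy_k$ over spheres, which is exactly what makes your truncated integrals converge to the principal value $K_{ij}^0$.
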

\begin{proof}
The convergence of the terms involving time derivatives in (i) is similar to the proof of Lemma \ref{conv.time}. To derive the integral identity in (ii) we first split the integral into the terms containing $\r_\nu^\e$ and $m^\e_\nu$ respectively:
\beqs
&&\e^{-\gamma} \int_0^\infty\int_{\mathbb{R}^{2d}} v_i  M \, \bar \phi\cdot \bar U_\nu^\e \,  \nu (\chi_i^\e-\varphi_i) \, dv dx dt \\
 &&\quad = \e^{-\gamma} \int_0^\infty\int_{\mathbb{R}^{2d}} v_i  M \r_\nu^\e \,  \nu (\chi_i^\e-\varphi_i) \, dv dx dt   + \e^{-\gamma} \int_0^\infty\int_{\mathbb{R}^{2d}} v_i  M \,v\cdot m_\nu^\e   \nu (\chi_i^\e-\varphi_i) \, dv dx dt\\
 &&\quad =: \bar I_1^\e + \bar I_2^\e \, .
\eeqs
We expand $ \nu(\chi_i^\e-\varphi_i)$ according to \eqref{exp.chi} in $\bar I_1^\e$:
\beqs
\bar I_1^\e&=&\e^{1-\gamma}\int_0^\infty\int_{\mathbb{R}^{2d}}v_i v  M dv \cdot \nabla_x \varphi_i \, \r_\nu^\e dx dt+ \e^{2-\gamma}  \int_0^\infty\int_{\mathbb{R}^{2d}}\int_0^\infty e^{- \nu z} v_i v D_x^2\varphi(t,x+\e v z) v   M\, dz dv\r_\nu^\e dx dt\\
&=& \e^{1-\gamma} \int_0^\infty\int_{\mathbb{R}^{d}}\r_\nu^\e \pa_{x_i}\varphi_i dx dt + \hat R_i^\e
\eeqs
where the latter integral vanishes in the limit $\e\rightarrow 0$:
\[
 |\hat R_i^\e |\leq C\e^{2-\gamma}\|D^2_x\varphi_i\|_{L^2_{t,x}}\|\r_\nu^\e\|_{L^2_{t,x}}\int_{\mathbb{R}^d}\frac{|v|^3}{\nu}M\, dv\leq C\e^{2-\gamma} \rightarrow 0 \,.
\]
We shall now derive the fractional Laplacian from the integral  $\bar I_2^\e$ and therefore, similar to above, split the integral into
\beqs
\bar I_2^\e &=& \e^{-\gamma} \int_0^\infty\int_{\mathbb{R}^{d}}\int_{|v|\leq 1} v_i M \,v\cdot m_\nu^\e   \nu (\chi_i^\e-\varphi_i) \, dv dx dt \\
&&\quad +
 \e^{-\gamma} \int_0^\infty\int_{\mathbb{R}^{d}}\int_{|v|\geq 1} v_i  M \r_\nu^\e \,v\cdot m_\nu^\e \nu (\chi_i^\e-\varphi_i) \, dv dx dt\\
 &=:& \bar J^\e_1 + \bar J^\e_2 \,.
\eeqs
Inserting \eqref{exp.chi} it is easy to see that $\bar J^\e_1$ vanishes in the limit $\e\rightarrow 0$.
We insert \eqref{aux.rel.2} in the integrand of $\bar J^\e_2$  to obtain after substituting $s=\nu z$
\beqs
\bar J_2^\e&=&\e^{- \gamma} \int_0^\infty\int_{\mathbb{R}^{d}}\int_{|v|\geq 1}\int_0^\infty \nu e^{-s}v_i v\cdot m_\nu^\e  M \left(\varphi_i\left(t,x+\e \frac{v}{\nu} s\right)-\varphi_i(t,x)\right) ds dv dx dt
\eeqs
Recalling the definition of $ \gamma = (\alpha- \beta-2)/(1-\beta)$ and using the same change of variables as in \eqref{change.var} we obtain
\beqs
&&(1- \beta)(\gamma +d)\bar I_2^\e\\
&&\quad =( \gamma +d)\int_0^\infty\int_{\mathbb{R}^{d}}\left(\int_{|w|\geq \e}\int_0^\infty e^{-s}\frac{w_i w}{|w|^2}\frac{1}{|w|^{\gamma+d}}(\varphi_i(t,x+sw)-\varphi_i(t,x))ds dw  \right)\cdot m_\nu^\e dx dt\\
&&\quad=-\int_0^\infty\int_{\mathbb{R}^{d}}\left(\int_{|w|\geq \e}\int_0^\infty e^{-s}\nabla_w\left(\frac{1}{|w|^{\gamma+d}}\right)w_i(\varphi_i(t,x+sw)-\varphi_i(t,x))ds dw  \right)\cdot m_\nu^\e dx dt\\
&&\quad=\int_0^\infty\int_{\mathbb{R}^{d}}\left(\int_{|w|\geq \e}\int_0^\infty e^{-s}\frac{1}{|w|^{\gamma+d}}(\varphi_i(t,x+sw)-\varphi_i(t,x))ds dw  \right) m_{\nu i}^\e dx dt\\
&&\qquad + \int_0^\infty\int_{\mathbb{R}^{d}}\left(\int_{|w|\geq \e}\int_0^\infty e^{-s}\frac{w_i}{|w|^{\gamma+d}}s \nabla_x \varphi_i(t,x+sw)ds dw  \right) \cdot  m_{\nu }^\e dx dt  \\
&&\qquad + \int_0^\infty\int_{\mathbb{R}^{d}}\left(\int_{|w|= \e}\int_0^\infty e^{-s}\frac{w_i}{|w|^{\gamma+d}}s  \varphi_i(t,x+sw) \frac{w}{|w|} ds d\sigma   \right)\cdot m_{\nu }^\e dx dt  \\
&&\quad=: \bar L^\e_1 +\bar L^\e_2+\bar b^\e\,,
\eeqs
where we performed integration by parts and used the fact that the outer unit normal on the sphere  is $w/|w|$. The convergence of $\bar L^\e_1$ towards the integral involving the fractional Laplacian
\beqs
\bar L_1^\e \rightarrow \kappa \int_0^\infty\int_{\mathbb{R}^d}m_i(-\Delta)^{\frac{ \gamma}{2}} \varphi_i \, dx dt
\eeqs
is deduced as in the proof of Lemma \ref{conv.mom.2}. Hence to conclude the proof it remains to show that $\bar L^\e_2$ and $\bar b^\e$ vanish in the limit. Therefore we first observe
\beqs
(1- \beta)(\gamma +d)\bar L^\e_2&=&\int_0^\infty\int_{\mathbb{R}^{d}}\left(\int_{|w|\geq \e}\int_0^\infty e^{-s}\frac{w_i}{|w|^{\gamma+d}}s \nabla_x \varphi_i(t,x+sw)ds dw  \right)\cdot m_{\nu}^\e dx dt\\
&=&-\int_0^\infty\int_{\mathbb{R}^{d}}\int_{|w|\geq \e}\int_0^\infty e^{-s}\frac{w_i}{|w|^{\gamma+d}}s ( \nabla_x \cdot m_{\nu }^\e ) \varphi_i(t,x+sw)ds dw  dx dt \\
&=& - \int_0^\infty\int_{\mathbb{R}^{2d}}\int_0^\infty e^{-s}\frac{w_i}{|w|^{\gamma+d}}s (\nabla_x \cdot m_{\nu }^\e ) \varphi_i(t,x+sw)ds dw  dx dt\\
&& + \int_0^\infty\int_{\mathbb{R}^{d}}\int_{|w|\leq \e}\int_0^\infty e^{-s}\frac{w_i}{|w|^{\gamma+d}}s (\nabla_x \cdot m_{\nu }^\e) \varphi_i(t,x+sw)ds dw  dx dt\\
&=:&\bar K_1^\e+\bar K_2^\e \,.
\eeqs
For the first integral $\bar K_1^\e$ we shall use the fact that $\nabla\cdot m_\nu^\e\rightharpoonup 0$ in $L^2_{t,x}$. Hence, if
\beq\label{int.K1}
\int_{\mathbb{R}^{d}}\frac{w_i}{|w|^{\gamma+d}}\int_0^\infty se^{-s} \varphi_i(t,x+sw)dsdw \eeq
is bounded in $L^2_{t,x}$, then $\bar K_1^\e \rightarrow 0$. Proceeding as in \eqref{est.L2} we can bound the $L_{t,x}^2$-norm of the integral \eqref{int.K1} over the domain $\{|w|\geq1\}$ directly by
\beqs
 C \|\varphi_i\|_{L^2_{t,x}}\int_1^\infty |w|^{- \gamma -d+1}dw\leq C\,.
\eeqs
For the integral \eqref{int.K1} over the domain $\{|w|\leq 1\}$ we observe that $se^{-s}=\pa_s((s+1)e^{-s})$. Integrating by parts in $s$ we can then bound the $L^2_{t,x}$-norm using an estimation of the type \eqref{est.L2} by
\beqs
 C \|\nabla_x\varphi\|_{L^2_{t,x}}\int_{|w|\leq 1}|w|^{-\gamma-d+2}dw\leq C
\eeqs
from which we can now deduce $\bar K_1^\e\rightarrow 0$ (note that the boundary term is odd in $w$ and hence vanishes).
To see $\bar K_2^\e\rightarrow 0$ we integrate by parts additionally in $x$
\[
|K_2^\e|\leq C\|m_\nu^\e\|_{L^2_{t,x}}\|D_x^2\varphi\|_{L^2_{t,x}}\int_{|w|\leq \e} |w|^{-\gamma -d+2} dw \leq C\e^{2-\gamma}\rightarrow 0\,.
\]
It now remains to show that the boundary terms vanish. We employ integration in parts twice
\beqs
|\bar b^\e|&=&\left|\frac{1}{ \gamma +d}\int_0^\infty \int_{\mathbb{R}^d}\int_{|w|=\e}\int_0^\infty e^{-s} w D_x^2\varphi(t,x+ sw) w \frac{w_i}{|w|^{ \gamma +d}}m_\nu^\e\cdot \frac{w}{|w|} d\sigma dw dx dt \right|\\
&\leq& C\|D_x^2\varphi\|_{L^2_{t,x}}\|m_\nu^\e\|_{L^2_{t,x}}\int_{|w|=\e} \frac{|w|^4}{|w|^{ \gamma +d +1}} d\sigma \leq C \e^{2- \gamma} \rightarrow 0\,.
\eeqs
\end{proof}

\section{Derivation of the macroscopic dynamics}\label{sec.proof}
\subsection{Derivation of the fractional Stokes-Fourier system}
The convergence of the solution $f^{\e}$ of the Cauchy problem in \eqref{resc.CP} was already shown. We will now derive the macroscopic equations determining the limiting solution stated in Theorem \ref{thm.1}.

\begin{proof}[Proof of Theorem \ref{thm.1}]
We start by deriving the incompressibility condition from equation \eqref{weak.form.macro}. Since $\pa_t \varphi$ and $\r^\e$ are both uniformly bounded in $L^2_{t,x}$,  multiplying \eqref{weak.form.macro} with $\e^{\gamma-1}$ and using the fact that $m^\e\rightharpoonup m$ in $L^2_{t,x}$ we obtain the incompressibility condition in the limit $\e\rightarrow 0$.

We shall now turn to the weak form of the first moments. Due to Lemma \ref{conv.mom.1} we know that
\beq
&&-\int_0^\infty \int_{\mathbb{R}^{2d}} v_i f^{\e} \pa_t \chi_i^{\e} dv dx dt - \int_{\mathbb{R}^{2d}}v_i f^{in}\chi_i^{\e}(t=0) dx dv \nonumber\\
&&=\frac{\e^{1-\gamma}}{d}\int_0^\infty\int_{\mathbb{R}^d}\left(\r_\nu^\e +\theta_\nu^\e\right) \pa_{x_i} \varphi_i dx dt +  R_i^\e\,,\qquad i\in\{1,\dots,d\} \, .
\eeq
Again, due to the boundedness of the terms on the left hand side and the remainder $R^\e$, which vanishes in the limit $\e\rightarrow 0$, we obtain after multiplying by $\e^{\gamma-1}$:
 \beq\label{Bous.eps}
 \left|\int_0^\infty\int_{\mathbb{R}^d}\left(\r_\nu^\e +\theta_\nu^\e\right) \pa_{x_i} \varphi_i dx dt \right|\leq C\e^{\gamma-1} \qquad \textnormal{for all} \ i\in\{1,\dots,d\}\,.
 \eeq
 Hence using the fact that $U_\nu^\e\rightharpoonup U$ in $L^2_{t,x}$, we obtain
 the Boussinesq relation. Moreover, carrying out the limit in the equation for $m^\e$ we obtain
\beqs
\pa_t m = \nabla_x p
\eeqs
in the weak sense, where $p(t,x)$ is the remainder of the Boussinesq relation:
 \beqs
 p(t,x)=\lim_{\e\rightarrow 0}\e^{1-\gamma}\left(\r_\nu^\e +\theta_\nu^\e\right)   = \lim_{\e\rightarrow 0}\e^{1-\gamma}\left(\r_\nu^\e - \r +(\theta_\nu^\e-\theta)\sqrt{2/d}\right)
 \eeqs
 which is bounded in $L^2_{t,x}$ due to \eqref{Bous.eps}. Using divergence-free testfunctions, i.e. $\sum_{i}\partial_{xi}\varphi_i=0$, we obtain $\pa_t m=0$.

 We shall now turn to the equation for $\theta$. Herefore we use the weak form of the moment corresponding to $\psi_{d+1}=\frac{|v|^2-(d+2)}{2}$. Lemma \ref{conv.time} and the Boussinesq relation imply
 \beqs
&&-\int_0^\infty \int_{\mathbb{R}^{2d}} \psi_{d+1} f^{\e} \pa_t \chi^{\e} dv dx dt - \int_{\mathbb{R}^{2d}}\psi_{d+1} f^{in}\chi^{\e}(t=0) dx dv\\
&&\quad \rightarrow \ \left(1+\frac{d}2\right)\int_0^\infty \int_{\mathbb{R}^{2d}} \theta\pa_t \varphi  dx dt - \left(1+\frac{d}2\right)\int_{\mathbb{R}^{2d}}\theta^{in} \varphi (t=0) dx dv
 \eeqs
 where we have used the Boussinesq equation for the limiting solution and the assumption on the initial data $\r^{in} + \theta^{in}= 0.$
 Lemma \ref{conv.mom.2} completes the derivation of the dynamics for the limiting function $f= M\phi\cdot U$.
 \end{proof}

\subsection{Derivation of the dynamics for fractional Stokes limit}
We finally for the limiting solution stated in Theorem \ref{thm.2}.
\begin{proof}[Proof of Theorem \ref{thm.2}]
The incompressibility condition from equation \eqref{weak.form.macro} can be deduced as in the proof of Theorem \ref{thm.1} above. Lemma \eqref{conv.stokes} implies
\beqs
 &&-\int_0^\infty\int_{\mathbb{R}^{2d}} v_i  f^\e \pa_t\chi_i^\e dv dx dt -\int_{\mathbb{R}^{2d}} v_i f^{in} \chi_i^\e(t=0) dv dx  \\
 &&\quad =  -\e^{1-\gamma} \int_0^\infty\int_{\mathbb{R}^{2d}} \varphi_i \pa_{x_i}\r^\e_\nu dx dt-\kappa \int_0^\infty\int_{\mathbb{R}^d} m_i (-\Delta)^{\gamma/2} \varphi_i \, dx dt + \bar R^\e_i \,,
\eeqs
where $\bar R_i^\e\rightarrow 0$ as $\e\rightarrow 0$. Using divergence-free  testfunctions, i.e. considering $\Phi=(\varphi_1,\dots, \varphi_d)^T$ with $\nabla\cdot \Phi =0$, we obtain in the limit
\beqs
&&-\int_0^\infty\int_{\mathbb{R}^{d}} m  \cdot \pa_t \Phi  dx dt -\int_{\mathbb{R}^{d}} m^{in}\cdot \Phi(t=0) dx  = \kappa \int_0^\infty\int_{\mathbb{R}^d} m\cdot (-\Delta)^{\gamma/2} \Phi\, dx dt \,,
\eeqs
which gives 
\beqs
&&\pa_t m = -\kappa (-\Delta_x  )^{-\frac{\gamma}{2}} m \\
&&m(0,x)=m^{in}(x)
\eeqs
in the distribution sense for divergence-free testfunctions.
\end{proof}

\end{document}